\documentclass{article}
\usepackage{graphicx} 
\usepackage{amsmath,amssymb,amsthm}
\usepackage{fullpage}
\usepackage[hcentering]{geometry}
\usepackage{caption} \usepackage{subcaption}
\usepackage{graphicx}
\usepackage{float}
\usepackage{placeins}
\usepackage{paralist}
\usepackage[utf8]{inputenc}   
\usepackage[T1]{fontenc} 
\bibliographystyle{plain}

\newif\iffinal
\finaltrue
\newif\ifarxiv
\arxivfalse

\setlength{\textheight}{9.25in}
\setlength{\textwidth}{6.5in}
\setlength{\oddsidemargin}{0in}
\setlength{\evensidemargin}{-.3in}
\setlength{\topmargin}{-.2in}
\setlength{\headheight}{0in}
\setlength{\headsep}{0in}
\setlength{\footskip}{.25in}
\setlength{\parindent}{1pc}
\usepackage{paralist}
\usepackage{url}

\usepackage{aliascnt}
\def\NewTheorem#1#2{%
  \newaliascnt{#1}{theorem}
  \newtheorem{#1}[#1]{#2}
  \aliascntresetthe{#1}
  \expandafter\def\csname #1autorefname\endcsname{#2}
}
 \newtheorem{theorem}{Theorem}[section]
\NewTheorem{lemma}{Lemma}
\NewTheorem{corollary}{Corollary}
\NewTheorem{conjecture}{Conjecture}
\NewTheorem{observation}{Observation}
\NewTheorem{definition}{Definition}
\NewTheorem{assumption}{Assumption}

\NewTheorem{proposition}{Proposition}
\NewTheorem{remark}{Remark}
\NewTheorem{claim}{Claim}
\NewTheorem{fact}{Fact}

\newcommand{\hide}[1]{}

\DeclareMathOperator{\lmax}{\ell_{max}}

\usepackage{color}
\usepackage[usenames,dvipsname s,svgnames,table]{xcolor}

\definecolor{darkred}{rgb}{0.5,0,0}
\definecolor{lightblue}{rgb}{0,0.4,0.8}
\definecolor{darkgreen}{rgb}{0,0.5,0}

\usepackage[colorlinks=true, linkcolor=blue, urlcolor=blue, citecolor=darkgreen]{hyperref}

\usepackage[utf8]{inputenc}

\begin{document}

\title{Using Single-Neuron Representations for Hierarchical Concepts as Abstractions of Multi-Neuron Representations}
\author{Nancy Lynch\footnote{Massachusetts Institute of Technology.  This work was supported by NSF under grants CCR-2139936 and CCR-2003830.}}
\date{March 21, 2025}
\maketitle

\begin{abstract}
Brain networks exhibit many complications, such as noise, neuron failures, and partial synaptic connectivity.
These can make it difficult to model and analyze the behavior of the networks.  
This paper describes one way to address this difficulty, namely, breaking down the models and analysis using \emph{levels of abstraction}. 
We describe the approach for a particular problem, \emph{recognition of hierarchically-structured concepts}, as considered, for example, in~\cite{LM21,DBLP:conf/sirocco/LynchM23,lynch2024multineuron}.

Realistic models for representing hierarchical concepts use multiple neurons to represent each concept, in order to tolerate variations such as failures of limited numbers of neurons;
see, for example, early work by Valiant~\cite{Valiant}, work on the assembly calculus~\cite{DBLP:conf/innovations/Legenstein0PV18,PapadimitriouVempala}, and our own recent work~\cite{lynch2024multineuron}.
These models are intended to capture some behaviors present in actual brains; however, analysis of mechanisms based on multi-neuron representations can be somewhat complicated.
On the other hand, simpler mechanisms based on single-neuron representations can be easier to understand and analyze~\cite{LM21,DBLP:conf/sirocco/LynchM23}, but are less realistic.

In this paper, we show that these two types of models are compatible, and in fact, networks with single-neuron representations can be regarded as formal, mathematical abstractions of networks with multi-neuron representations.
%
We do this by revisiting networks with multi-neuron representations like those in~\cite{lynch2024multineuron}, and relating them formally to networks with single-neuron representations like those in~\cite{LM21}.
As in~\cite{LM21,lynch2024multineuron}, we consider the problem of recognizing hierarchical concepts based on partial information. 

Specifically, we consider two networks, $\mathcal H$ and $\mathcal L$, with multi-neuron representations, one with high connectivity and one with low connectivity.
We define two abstract networks, $\mathcal{A}_1$ and $\mathcal{A}_2$, with single-neuron representations, and prove that they recognize concepts correctly.
Then we prove correctness of $\mathcal H$ and $\mathcal L$ by relating them formally to $\mathcal{A}_1$ and $\mathcal{A}_2$.
In this way, we decompose the analysis of each multi-neuron network into two parts:  analysis of abstract, single-neuron networks, and proofs of formal relationships between the multi-neuron network and single-neuron networks.

These examples illustrate what we consider to be a promising, tractable approach to analyzing other complex brain mechanisms.
\end{abstract}

\section{Introduction}

We continue our work in~\cite{LM21,DBLP:conf/sirocco/LynchM23,lynch2024multineuron} on representing hierarchically-structured concepts in layered Spiking Neural Networks, in such a way that concepts can be recognized based on only partial input information.
Our first paper~\cite{LM21} began the study by considering simple representations of concepts in feed-forward networks with total connectivity between layers.
There, each concept is represented by a single reliable neuron, at a layer of the network corresponding to the level of the concept.  The focus in that paper is on learning the representations, in both noise-free and noisy settings.
Next~\cite{DBLP:conf/sirocco/LynchM23}, we extended the recognition work in~\cite{LM21} to concepts that allow some exceptions to strict hierarchical structure and networks that include some feedback edges.

Most recently~\cite{lynch2024multineuron}, we extended the recognition work of~\cite{LM21} in order to tolerate limited numbers of neuron failures and partial network connectivity, in addition to partial input information.
Now, in order to tolerate failures and disconnections, our networks use multiple neurons to represent each concept.  
This is reminiscent of work on assembly calculus~\cite{DBLP:conf/innovations/Legenstein0PV18,PapadimitriouVempala,PVMM20}, and is realistic for concept representations in actual brains.

The proofs of our networks in~\cite{lynch2024multineuron} are somewhat complicated, which led us to look for ways of simplifying them by decomposing them into smaller pieces.
One approach that has been used successfully in the distributed algorithms and formal methods communities is to describe and prove properties of systems using \emph{levels of abstraction}.
We wondered whether we could view the detailed networks of~\cite{lynch2024multineuron}, which use multiple neurons to represent each concept in order to tolerate neuron failures and edge disconnections, as "implementations" of more abstract networks based on a single reliable neuron for each concept.  
This paper demonstrates that this can work.

Our first observation was that the models for failures and disconnections in~\cite{lynch2024multineuron} are probabilistic, based on random initial failures of neurons and random connections between layers.
We considered introducing a small probability of failure into the abstract networks, and using a probabilistic mapping between detailed and abstract networks, similar to mappings in~\cite{DBLP:journals/njc/SegalaL95}.  
But this approach does not seem so natural, since it involves preserving exact probabilities, which seems like too stringent a requirement for our purposes.
For example,  failure probabilities are normally not intended to describe exact probabilities, but as overestimates.

So we took a different approach:  extract the probabilities from each detailed network, leaving us with a network that behaves deterministically, for each particular failure pattern, connectivity pattern, and input.
The failure pattern and connectivity pattern are required to satisfy certain constraints, and the probabilistic network should "behave like" the deterministic network with high probability.
Then establish a formal correspondence between the detailed deterministic network and an abstract deterministic network (with the same input). Since this involves a mapping between two deterministic networks, the mapping could be a simpler type of (non-probabilistic) mapping, similar to those used in~\cite{DBLP:conf/podc/LynchT87}. 

In our case, the detailed networks include the constraint that at least $m (1-\epsilon)$ of the $reps$ of each concept "survive", that is, do not fail.
Here, $m$ is the number of $reps$ for each concept, 
and $\epsilon$ is a small \emph{recognition approximation} parameter.
The value of $\epsilon$ can be computed as $1 - p (1-\zeta)$, 
where $p = 1-q$, $q$ is the neuron failure probability for the probabilistic version of the detailed network, and $\zeta$ is a concentration parameter for a Chernoff bound.
For the case with limited connectivity, the detailed network also includes the constraint that for every $v \in reps(c)$, and each child $c'$ of $c$, 
at least $a m (1-\epsilon)$ surviving $reps$ of $c'$ are connected to $v$. 
Here, $a$ is a coefficient used to capture the notion of partial connectivity.

So, we wanted to relate our deterministic detailed networks, with failures and disconnections, to a deterministic abstract network $\mathcal A$ with no failures or disconnections.  
We wanted to prove separately that the abstract network $\mathcal A$ is correct and that the behavior of the detailed networks corresponds closely to that of $\mathcal A$; then together this should yield that the detailed networks behave correctly.
That would further imply that, with high probability, the probabilistic versions of the detailed networks behave correctly.

But that is not quite how it worked out.
As it happens, correctness for the recognition problem has two aspects:  with enough partial information, concepts \emph{should} be recognized, and with too little partial information, the concepts \emph{should not} be recognized.  
That is, we have a firing requirement and a non-firing requirement.
There is also a "middle ground":  with intermediate amounts of information, either outcome is permissible.
To cope with this middle ground, we found it convenient to consider two abstract networks rather than just one:  a network $\mathcal{A}_1$ that guarantees the firing requirement and a network $\mathcal{A}_2$ that guarantees the non-firing requirement.

We consider two detailed networks, $\mathcal{H}$ with high connectivity and $\mathcal{L}$ with low connectivity, both based on detailed probabilistic networks from~\cite{lynch2024multineuron}.
We show that each of these corresponds to both $\mathcal{A}_1$ and $\mathcal{A}_2$, which implies that both networks $\mathcal{H}$ and $\mathcal{L}$ perform correct recognition.
In order to establish these correspondences, we define formal notions of \emph{implementation} to map detailed networks to abstract networks while preserving correct behavior.  


\paragraph{Outline of the paper:}
In Section~\ref{sec: concepts}, we define our model for hierarchies of concepts, including the notion of when certain partial information is sufficient to "support" recognition.
In Section~\ref{sec: networks}, we define our general Spiking Neural Network model, which we use to describe all of our networks, both abstract and detailed.
Section~\ref{sec: prob-recog} contains our definitions for the recognition problem; we have two separate definitions, one for networks in which concepts have single-neuron representations and one for networks with multi-neuron representations.
Section~\ref{sec: implementation} contains our definitions of implementation relationships between detailed networks and abstract networks.
Section~\ref{sec: abstract-algorithms} contains definitions of our abstract networks $\mathcal{A}_1$ and $\mathcal{A}_2$ and proofs that they work together to solve the recognition problem for hierarchical concepts.
Section~\ref{sec: detailed-algorithm-high} contains definitions for our first detailed network, $\mathcal H$, with high connectivity, and proves its correctness by showing that it implements both $\mathcal{A}_1$ and $\mathcal{A}_2$.
Section~\ref{sec: detailed-algorithm-low} contains definitions for our second detailed network, $\mathcal L$, with low connectivity, and proves its correctness, again by showing that it implements both $\mathcal{A}_1$ and $\mathcal{A}_2$.
Section~\ref{sec: conclusions} contains our conclusions.

\section{Concept Model}
\label{sec: concepts}

We define concept hierarchies as in~\cite{LM21}.  
In general, we think of a concept hierarchy as containing all of the concepts that have been learned by an organism over its lifetime.

\subsection{Preliminaries}

In referring to concept hierarchies, we use the following parameters:
\begin{itemize}
\item $\ell_{max}$, a positive integer, representing the maximum level number for the concepts that we consider.
\item $n$, a positive integer, representing the total number of lowest-level concepts that we consider.
\item $k$, a positive integer, representing the number of top-level concepts in any concept hierarchy, and also the number of child concepts for each concept whose level is $\geq 1$.
\item $r_1, r_2$, reals in $[0,1]$ with $r_1 \leq r_2$; these represent thresholds for noisy recognition.
\end{itemize}

We assume a universal set $D$ of \emph{concepts}, partitioned into disjoint sets $D_{\ell}$,  $0 \leq \ell \leq \lmax$.
We refer to any particular concept $c \in D_{\ell}$ as a \emph{level} $\ell$ \emph{concept}, and write $level(c) = \ell$.
Here, $D_0$ represents the most basic concepts and $D_{\lmax}$ the highest-level concepts.
We assume that $|D_0| = n$.

\subsection{Concept hierarchies}

A \emph{concept hierarchy} $\mathcal C$ consists of a subset $C$ of $D$, together with a $children$ function.  For each $\ell$, $0 \leq \ell \leq \lmax$, we define $C_{\ell}$ to be $C \cap D_{\ell}$, that is, the set of level $\ell$ concepts in $\mathcal C$.
For each concept $c \in C_{\ell}$, $1 \leq \ell \leq \ell_{max}$, we designate a nonempty set $children(c) \subseteq C_{\ell-1}$.
We call each $c' \in children(c)$ a \emph{child} of $c$.
We assume the following properties.

\begin{enumerate}
\item
$|C_{\lmax}| = k$; that is, the number of top-level concepts is exactly $k$.
\item
For any $c \in C_{\ell}$, where $1 \leq \ell \leq \lmax$, we have that $|children(c)| = k$; that is, the degree of any internal node in the concept hierarchy is exactly $k$.
\item
For any two distinct concepts $c$ and $c'$ in $C_{\ell}$, where $1 \leq \ell \leq \lmax$, we have that $children(c) \cap children(c') = \emptyset$; that is, the sets of children of different concepts at the same level are disjoint.\footnote{Thus, we allow no overlap between the sets of children of different concepts. We study overlap in~\cite{DBLP:conf/sirocco/LynchM23}.}
\end{enumerate}
Thus, a concept hierarchy ${\mathcal C}$ is a forest with $k$ roots and height $\lmax$.  Of course, this is a drastic simplification of any real concept hierarchy, but the uniform structure makes networks easier to analyze.

We extend the $children$ notation recursively by defining a concept $c'$ to be a $descendant$ of a concept $c$ if either $c' = c$, or $c'$ is a child of a descendant of $c$.
We write $descendants(c)$ for the set of descendants of $c$.
Let $leaves(c) = descendants(c) \cap C_0$, that is, all the level 0 descendants of $c$.

\subsection{Support}
\label{sec:support}

Now we define which sets of level $0$ concepts provide enough partial information to "support" recognition of higher-level concepts.  

We fix a particular concept hierarchy $\mathcal C$, with its concept set $C$ partitioned into $C_0,\ldots,C_{\lmax}$.
For any given subset $B$ of the universal set $D_0$ of level $0$ concepts, and any real number $r \in [0,1]$, we define the set $supp_r(B)$ of concepts in $C$.
This is intended to represent the set of concepts $c \in C$ at all levels that have enough of their leaves present in $B$ to support recognition of $c$. 
The notion of "enough" here is defined recursively, based on having an $r$-fraction of children supported for every concept at every level.


\begin{definition}[\textbf{Supported}]
\label{def:support}
Given $B \subseteq D_0$, define the following sets of concepts at all levels, recursively:
\begin{enumerate}
\item
$B(0) = B \cap C_0$. 
\item
For $1 \leq \ell \leq \ell_{max}$, $B(\ell)$ is the set of all concepts $c \in C_{\ell}$ such that $|children(c) \cap B(\ell - 1)|  \ \geq \ r k$. 
\end{enumerate}
Define $supp_r(B)$ to be $\bigcup_{0 \leq \ell \leq \lmax} B(\ell)$.  
We say that each concept in $supp_r(B)$ is $r$-$supported$ by $B$.
\end{definition}

\section{Spiking Neural Network Model}
\label{sec: networks}

We consider feed-forward networks in which all edges point from neurons in one layer to neurons in the next-higher-numbered layer.

We consider neuron failures during the recognition process.
For our failure model, we consider initial stopping failures: if a neuron fails, it never performs any activity, that is, it never updates its state and never fires. 

We do not consider learning in this paper, only recognition.  So we omit all aspects of our previous network model~\cite{LM21} that involve learning.

\subsection{Preliminaries}

Throughout the paper, in referring to our networks, we use the following parameters:
\begin{itemize}
     \item
     $\ell'_{max}$, a positive integer, representing the maximum number of a layer in the network.
     \item 
     $width$, a mapping from the set of layer numbers $\{1,\ldots,\ell'_{max}\}$ to nonnegative integers, giving the number of neurons in each layer.
     \item
     $\tau$, a nonnegative real, representing the firing threshold for neurons.
\end{itemize}

\subsection{Network structure}
\label{sec: network-structure}

Our networks are directed graphs consisting of neurons arranged in layers, with forward edges directed from each layer to the next-higher layer.  
Specifically, a network $\mathcal{N}$ consists of a set $N$ of neurons, partitioned into disjoint sets $N_{\ell}, 0 \leq \ell \leq \ell'_{max}$, which we call \emph{layers}.
We refer to any particular neuron $u \in N_{\ell}$ as a \emph{layer} $\ell$ \emph{neuron}, and write $layer(u) = \ell$.
We refer to the layer $0$ neurons as \emph{input neurons}.

We assume total connectivity between successive layers, that is, each neuron $u \in N_{\ell}$, $0 \leq \ell \leq \ell'_{max} - 1$, has an outgoing edge to each neuron $v \in N_{\ell+1}$.  In this paper, these are the only edges.

We assume that the state of each neuron consists of several state components:
\begin{itemize}
    \item \emph{firing}, with values in  $\{0,1\}$; this indicates whether or not the neuron is currently firing, where $1$ indicates that it is firing and $0$ indicates that it is not firing.
    \item \emph{failed}, with values in  $\{0,1\}$; this indicates whether or not the neuron has failed,where $1$ indicates that it has failed and $0$ indicates that it has not failed.  In this paper we consider initial stopping failures only.
\end{itemize}
We denote the \emph{firing} component of neuron $u$ at integer time $t$ by $firing^u(t)$ and the \emph{failed} component of neuron $u$ at time $t$ by $failed^u(t)$.  In this paper, failures occur only at the start, thus, each $failed^u$ component is constant over time.  

For our abstract networks, we will not consider failures; for this case, to simplify matters, we omit
the $failed$ component.

Each non-input neuron $u \in N_{\ell}$, $1 \leq \ell \leq \ell'_{max}$, has an additional state component:
\begin{itemize}
\item
\emph{weight}, a real-valued vector in $\{0,1\}^n$ representing the weights of all incoming edges.
\end{itemize}
We denote this component of non-input neuron $u$ at time $t$ by $weight^u(t)$.

\subsection{Network operation}
\label{sec: network-operation}

The network operation is determined by the behavior of the individual neurons.
We distinguish between input neurons and non-input neurons.

If $u$ is an \emph{input neuron}, then it has only two state components, $failed$ and $firing$.
Since $u$ is an input neuron, we assume that the values of both $failed^u$ and $firing^u$ at all times $t$ are 
controlled by the network's environment and not by the network itself; that is, the values of $failed^u(t)$ and $firing^u(t)$ are set by some external force, which we do not model explicitly.
Since we assume initial stopping failures, the value of $failed^u$ is the same at every time $t$, that is, $failed^u(t) = failed^u(0)$ for every $t \geq 0$.
We assume that if an input neuron fails, it never fires, that is, $failed^u(0) = 1$ implies that $firing^u(t) = 0$ for every $t \geq 0$.

If $u$ is a \emph{non-input neuron}, then it has three state components, $failed$, $firing$, and $weight$.
The values of $failed^u$ at all times $t$ are set by an external force, as for input neurons.  Again, since we assume initial stopping failures, the value of $failed^u$ is the same at every time $t$, that is, $failed^u(t) = failed^u(0)$ for every $t \geq 0$.
A non-input neuron $u$ that fails never fires, that is, $failed^u(0) = 1$ implies that $firing^u(t) = 0$ for every $t \geq 0$.

For a non-input neuron $u$ that does not fail, the value of $firing^u(0)$ is determined by the initial network setting, whereas the value of $firing^u(t)$, $t \geq 1$, is determined by $u$'s incoming \emph{potential} and its \emph{activation function}.
To define the potential, let $x^u(t)$ denote the vector of $firing$ values of $u$'s incoming neighbor neurons at time $t$.
These are all the nodes in the layer numbered $layer(u) - 1$.
Then the potential for time $t$, $pot^u(t)$, is given by the dot product of the $weight$ vector and incoming firing pattern at neuron $u$ at time $t-1$, that is, 
\[
pot^u(t) = weight^u(t-1) \cdot x^u(t-1) = \sum_j weight^u_j(t-1) x^u_j(t-1)j,
\]
where $j$ ranges over the set of incoming neighbors.
The activation function, which specifies whether or not neuron $u$ fires at time $t$, is then defined by:
\[ 
firing^u(t) =  
\begin{cases}
1 & \text{if } pot^u(t) \geq \tau, \\
0 & \text{if } \text{otherwise},
\end{cases}
\]
where $\tau$ is the assumed firing threshold.

For a non-input neuron $u$, the value of $weight^u(0)$ is determined by the initial network setting.
In this paper, we assume that the $weight$ vector remains unchanged: $weight^u(t) = weight^u(0)$ for every $t \geq 0$.

During execution, the network proceeds through a sequence of \emph{configurations}, where each configuration specifies a state for every neuron in the network, that is, values for all the state components of every neuron.

\section{Two Recognition Problems}
\label{sec: prob-recog}

In this section, we define our recognition problem formally.
We define two versions of the problem, one for "abstract" networks with single-neuron representations and one for "detailed" networks with multi-neuron representations.
We need two different definitions because we assume different conventions for input and output for these two cases.
In Section~\ref{sec: implementation} we describe how these two definitions are related.

The input to the networks involves "presenting" a set $B$ of level $0$ concepts, according to a presentation convention.
This convention is different for abstract networks with single-neuron representations and detailed networks with multi-neuron representations.
The output requirements are expressed in terms of firing guarantees for certain sets of neurons.
We do not consider probabilities here, but express the behavior just in terms of sets of firing neurons. 
In realistic versions of the multi-neuron networks, these sets of neurons could be said to fire only with high probability, not with certainty.
However, we leave these probabilities to be analyzed elsewhere, for example,~\cite{lynch2024multineuron}.

We consider a particular concept hierarchy $\mathcal C$,
with concept hierarchy notation as defined in Section~\ref{sec: concepts}.\footnote{This is slightly different from what we did in~\cite{lynch2024multineuron}, where we restricted attention to a single concept $c$ and all its descendants, i.e., a tree rather than a $k$-root forest.  Here we use the full concept hierarchy, with $k$ roots.  The main difference in the results is that we have more concepts in the concept hierarchy, which would require the probability bounds from~\cite{lynch2024multineuron} to be adjusted accordingly.  Since we are not focusing on probabilities in this paper, this will not affect our results.}
For our networks, we use notation as defined in Section~\ref{sec: networks}.

\subsection{Recognition for a single-neuron representation}
\label{sec: recog-single}

In this subsection, we define what it means for an "abstract" network $\mathcal A$ that uses a single-neuron representation to recognize concept hierarchy $\mathcal C$.

The definition assumes that every level $0$ concept $c$ has a single designated representing layer $0$ neuron $rep(c)$; this will be used to provide $c$ as an input to the network.
For any $B \subseteq D_0$, we define $reps(B) = \{ rep(b) \ | \ b \in B \}$.
That is, $reps(B)$ is the set of all $reps$ of concepts in $B$.
Furthermore, we assume that every concept $c \in C$ with $level(c) \geq 1$ has a representing neuron, $rep(c)$, in some layer $\geq 1$ of the network.
All of the $rep(c)$ neurons are distinct.

Note that these conventions require that $width(0) \geq |D_0|$, that is, the number of neurons in layer $0$, is at least  $|D_0|$.  Similarly, there must be sufficiently many neurons in higher layers to accommodate all of the $reps$.

Our recognition problem definition relies on the following definition of how a particular set $B$ of level $0$ concepts is ``presented'' to the network.  This involves firing exactly the input neurons that represent these level $0$ concepts.

\begin{definition}[\textbf{Presented}]
\label{def: presented1}
If $B \subseteq D_0$ and $t \geq 0$, then we say that $B$ is \emph{presented at time} $t$ (in some particular network execution) exactly if the following holds.
For every layer $0$ neuron $u$:
\begin{enumerate}
    \item  If $u \in reps(B)$, then $u$ fires at time $t$.
    \item  If $u \notin reps(B)$, then $u$ does not fire at time $t$.
\end{enumerate}
\end{definition}

Now we can define what it means for network $\mathcal A$ to recognize concept hierarchy $\mathcal C$.  The definition says that, for each concept $c$ that is $r_2$-supported by $B$, $rep(c)$ must fire.
On the other hand, if $c$ is not $r_1$-supported by $B$, then $rep(c)$ must not fire.
To simplify things a bit, we assume here that $B \subseteq C_0$, that is, $B$ is a subset of the actual level $0$ concepts in the concept hierarchy $\mathcal C$.

\begin{definition}[\textbf{Recognition problem for single-neuron representations}]
\label{def: recog-ff-single}
Network $\mathcal A$ $(r_1,r_2)$-\emph{recognizes} $\mathcal C$ provided that 
the following holds.
Assume that $B \subseteq C_0$ is presented at time $0$.  
Then:
\begin{enumerate}
\item
\emph{When $rep(c)$ should fire:}
If $c \in supp_{r_2}(B)$, then neuron $rep(c)$ fires at time $layer(rep(c))$.
\item
\emph{When $rep(c)$ should not fire:}
If $c \notin supp_{r_1}(B)$, then neuron $rep(c)$ does not fire at time $layer(rep(c))$.
\end{enumerate}
\end{definition}

For use later in the paper, we find it convenient to split this two-part definition into two parts: 

\begin{definition}[\textbf{Firing guarantee for single-neuron representations}]
\label{def: firing-guarantee}
Network $\mathcal A$ \emph{guarantees} $r_2$-\emph{firing} for $\mathcal C$ provided that the following holds.
If $B \subseteq C_0$ is presented at time $0$ and if $c \in supp_{r_2}(B)$, then neuron $rep(c)$ fires at time $layer(rep(c))$.
\end{definition}

\begin{definition}[\textbf{Non-firing guarantee for single-neuron representations}]
\label{def: non-firing-guarantee}
Network $\mathcal A$ \emph{guarantees} $r_1$-\emph{non-firing} for $\mathcal C$ provided that
the following holds.
If $B \subseteq C_0$ is presented at time $0$ and if $c \notin supp_{r_1}(B)$, then neuron $rep(c)$ does not fire at time $layer(rep(c))$.
\end{definition}

\subsection{Recognition for a multi-neuron representation}
\label{sec: recog-multi}

Now we define what it means for a "detailed" network $\mathcal D$ that uses a multi-neuron representation to recognize concept hierarchy $\mathcal C$.
The definition assumes that every level $0$ concept $c$ has exactly $m$ layer $0$ neurons $reps(c)$.
For any $B \subseteq D_0$, we define $reps(B) = \bigcup_{b \in B} reps(b)$.
That is, $reps(B)$ is the set of all $reps$ of concepts in $B$.
Furthermore, we assume that every concept $c \in C$ with $level(c) \geq 1$ has a set of $m$ representing neurons, $reps(c)$, in layers $\geq 1$ of the network.
All of these $reps$ sets are disjoint.  

Note that this requires that $width(0) \geq |D_0| \ m$, that is, the number of neurons in layer $0$ is at least  $|D_0| \ m$.  Also, there must be sufficiently many neurons in higher layers to accommodate all of the $reps$.

Our recognition problem for multi-neuron representations uses the following new parameter:
\begin{itemize}
\item  
$\epsilon \in [0,1]$; this is the \emph{recognition approximation} parameter, representing a fraction of $rep$ neurons that might not fire.
\end{itemize}

We assume that the set $F$ of failed neurons (including the input neurons) is determined at the start.
For every neuron $u \in F$, the $failed$ flag of $u$ is set to $1$ at time $0$ and remains $1$ thereafter, that is, $failed^u(t) = 1$ for every $t \geq 0$.
For every neuron $u \notin F$, the $failed$ flag of $u$  is set to $0$ at time $0$ and remains $0$ thereafter, that is, $failed^u(t) = 0$ for every $t \geq 0$.

Our recognition problem definition relies on the following definition of how a particular set $B$ of level $0$ concepts is presented to the network.  This involves firing exactly the input neurons that represent these level $0$ concepts and have not failed.

\begin{definition}[\textbf{Presented}]
\label{def: presented2}
If $B \subseteq D_0$ and $t \geq 0$, then we say that $B$ is \emph{presented at time} $t$ (in some particular network execution) exactly if the following holds.
For every layer $0$ neuron $u$:
\begin{enumerate}
    \item  If $u \in reps(B) - F$, then $u$ fires at time $t$.
    \item  If $u \notin reps(B)$ or $u \in F$, then $u$ does not fire at time $t$.
\end{enumerate}
\end{definition}

Now we can define what it means for network $\mathcal D$ to recognize concept hierarchy $\mathcal C$.
We require that, for each concept $c$ that is $r_2$-supported by $B$, then
at least $m (1 - \epsilon)$ of the neurons in $reps(c)$ must fire.
On the other hand, if $c$ is not $r_1$-supported by $B$, then none of the neurons in $reps(c)$ should fire.

\begin{definition}[\textbf{Recognition problem for multi-neuron representations}]
\label{def: recog-ff-multi}
Network $\mathcal D$ $(r_1,r_2)$-\emph{recognizes} $\mathcal C$ provided that
the following holds. 
Assume that $B \subseteq C_0$ is presented at time $0$.  
Then:
\begin{enumerate}
\item
\emph{When $reps(c)$ neurons should fire:}
If $c \in supp_{r_2}(B)$, then 
at least $m (1 - \epsilon)$ of the neurons $v \in reps(c)$ fire at time $layer(v)$.
\item
\emph{When $reps(c)$ neurons should not fire:}
If $c \notin supp_{r_1}(B)$, then no neuron $v \in reps(c)$ fires at time $layer(v)$.
\end{enumerate}
\end{definition}

Unlike in the single-neuron case, we will not need to split this definition into two parts. 
We find such a split to be useful in this paper for abstract networks, but not for detailed networks.  We will show that each of our detailed networks implements two different abstract networks, one for each part of the recognition requirement.

\section{Relating Solutions to the Two Recognition Problems}
\label{sec: implementation}

In this section, we define formal implementation relationships between a detailed network $\mathcal D$ and an abstract network $\mathcal A$.
We would like to use these to show that, if $\mathcal A$ solves the recognition problem for single-neuron representations, then $\mathcal D$ solves the recognition problem for multi-neuron representations.

Throughout the rest of this paper, we fix a particular concept hierarchy $\mathcal C$.
We assume that, in $\mathcal A$, every concept $c \in C$ has one representing neuron, $rep(c)$, as specified in Section~\ref{sec: recog-single}.
In $\mathcal D$, every concept $c$ has a size-$m$ set of representing neurons, $reps(c)$, as specified in Section~\ref{sec: recog-multi}.
$\mathcal A$ does not admit any neuron failures, whereas $\mathcal D$ allows failures, as defined by a set $F$ of failed neurons.

We will define these implementation relationships in terms of individual executions of the two networks.  To do this, we remove some complications, by assuming that the layers in the networks correspond to the levels in the hierarchy.  That is:
\begin{itemize}
\item $\ell'_{max} = \ell_{max}$; the number of layers in the network is the same as the number of levels in the concept hierarchy.
    \item  In $\mathcal A$, for every concept $c$, $layer(rep(c)) = level(c)$.
    \item  In $\mathcal D$, for every concept $c$ and every $v \in reps(c)$, $layer(v) = level(c)$.
\end{itemize}
Moreover, we assume that, in both networks $\mathcal A$ and $\mathcal D$, the input set $B \subseteq C_0$ is presented at time $0$, and no firing of layer $0$ neurons occurs at any other time. 

With these assumptions, network $\mathcal A$ behaves deterministically given a particular input set $B$, and network $\mathcal D$ behaves deterministically given a particular set $F$ of failed neurons and input set $B$.

\subsection{First attempt}

Here is our first attempt at an approach to proving the correctness condition in Definition~\ref{def: recog-ff-multi} using implementation relationships between abstract and detailed networks.
We assume an abstract network $\mathcal{A}$ and a detailed network $\mathcal{D}$. 
$\mathcal D$ includes a fixed set $F$ of failed neurons. 
Constraints on $F$ will be defined in Sections~\ref{sec: detailed-algorithm-high} and~\ref{sec: detailed-algorithm-low}, for our particular detailed networks $\mathcal H$ and $\mathcal L$.

\begin{definition}[$Implements$, $\leq_{impl}$]
\label{def: 2-way implementation}
Network $\mathcal D$ \emph{implements} network $\mathcal A$ provided that, for every $B \subseteq C_0$, for the two unique executions of these two networks on $B$, the following holds: 
For every concept $c \in C$,
\begin{enumerate}
\item 
If $rep(c)$ fires at time $level(c)$ in $\mathcal A $, then in $\mathcal D$, at least $m (1-\epsilon)$ of the neurons $v \in reps(c)$ fire, each such $v$ at time $level(c)$.
\item 
If $rep(c)$ does not fire at time $level(c)$ in $\mathcal A$, then in $\mathcal{D}$, no neuron $v \in reps(c)$ fires at time $level(c)$.
\end{enumerate}
We write $\mathcal{D} \leq_{impl} \mathcal{A}$ as shorthand for $\mathcal D$ \emph{implements} $\mathcal A$.
\end{definition}
Note that Part 1 of the definition is stated in terms of the approximation parameter $\epsilon$.

This implementation definition is enough to infer correctness of the detailed network $\mathcal D$ from correctness of the abstract network $\mathcal A$:

\begin{theorem}
\label{thm: general-mapping-theorem}
Let $r_1$ and $r_2$ be reals in $[0,1]$ with $r_1 \leq r_2$.
If $\mathcal A$ $(r_1,r_2)$-recognizes $\mathcal C$ according to Definition~\ref{def: recog-ff-single} and $\mathcal{D} \leq_{impl} \mathcal{A}$, then $\mathcal D$ $(r_1,r_2)$-recognizes $\mathcal C$ according to Definition~\ref{def: recog-ff-multi}.
\end{theorem}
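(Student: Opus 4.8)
The plan is to prove this by a direct definitional chase, composing the two hypotheses concept by concept, with no probabilistic reasoning needed. First I would fix an arbitrary input set $B \subseteq C_0$ and consider the unique executions of $\mathcal A$ and of $\mathcal D$ (the latter relative to its fixed failure set $F$) in which $B$ is presented at time $0$; by the standing conventions of \autoref{sec: implementation} these are exactly the two executions over which $\mathcal D \leq_{impl} \mathcal A$ is asserted. I would then verify the two clauses of \autoref{def: recog-ff-multi} for $\mathcal D$ separately, each for an arbitrary concept $c \in C$.

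For the firing clause, suppose $c \in supp_{r_2}(B)$. Since $\mathcal A$ $(r_1,r_2)$-recognizes $\mathcal C$, Part~1 of \autoref{def: recog-ff-single} gives that $rep(c)$ fires in $\mathcal A$ at time $layer(rep(c))$. The standing assumption $layer(rep(c)) = level(c)$ lets me rewrite this firing time as $level(c)$, which is precisely the hypothesis of Part~1 of the implementation relation. Applying that part yields that at least $m(1-\epsilon)$ of the neurons $v \in reps(c)$ fire in $\mathcal D$, each at time $level(c) = layer(v)$, which is exactly the firing clause of \autoref{def: recog-ff-multi}. For the non-firing clause, suppose $c \notin supp_{r_1}(B)$: Part~2 of \autoref{def: recog-ff-single} gives that $rep(c)$ does not fire in $\mathcal A$ at time $level(c)$, and Part~2 of the implementation relation then gives that no neuron $v \in reps(c)$ fires in $\mathcal D$ at time $level(c) = layer(v)$, as required.

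I do not expect a genuine mathematical obstacle here: the content of the theorem is really that the implementation relation has been set up as exactly the right glue between the single- and multi-neuron recognition definitions, so the argument is a transitive composition of the two hypotheses. The only points requiring care are bookkeeping rather than mathematics. I must invoke the conventions from \autoref{sec: implementation} — that $\ell'_{max} = \ell_{max}$, that the $layer$ of each representative equals the $level$ of its concept, and that $B$ is presented at time $0$ under each network's own presentation convention (\autoref{def: presented1} for $\mathcal A$, \autoref{def: presented2} for $\mathcal D$) — so that the firing times named in the three definitions genuinely coincide and the executions I argue about are the same ones quantified over in $\leq_{impl}$.

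Since $B$, $c$, and the two clauses were treated with no further assumptions, the two requirements of \autoref{def: recog-ff-multi} hold for every presented $B$ and every $c$, establishing that $\mathcal D$ $(r_1,r_2)$-recognizes $\mathcal C$. The role of the approximation parameter $\epsilon$ enters only through Part~1, where the abstract single neuron firing is weakened to the survival of an $(1-\epsilon)$-fraction of the detailed representatives; Part~2 transfers the non-firing condition verbatim, with no loss.
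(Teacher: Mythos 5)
Your proposal is correct and matches the paper's own proof essentially verbatim: fix $B \subseteq C_0$, then verify the two clauses of Definition~\ref{def: recog-ff-multi} separately by chaining Part~1 (resp.\ Part~2) of Definition~\ref{def: recog-ff-single} with Part~1 (resp.\ Part~2) of Definition~\ref{def: 2-way implementation}. Your extra care about the $layer(rep(c)) = level(c)$ convention is sound bookkeeping that the paper leaves implicit.
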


\begin{proof}
Assume that $\mathcal A$ $(r_1,r_2)$-recognizes $\mathcal C$ and $\mathcal{D} \leq_{impl} \mathcal{A}$. 
To show that $\mathcal D$ $(r_1,r_2)$-recognizes $\mathcal C$, we   
show the two parts of Definition~\ref{def: recog-ff-multi} separately.
Fix some particular $B \subseteq C_0$.

First, suppose that $c \in supp_{r_2}(B)$.  Then by Part 1 of Definition~\ref{def: recog-ff-single}, $rep(c)$ fires at time $level(c)$ in $\mathcal A$.
Then Part 1 of Definition~\ref{def: 2-way implementation} implies that, in $\mathcal D$, at least $m (1-\epsilon)$ of the neurons $v \in reps(c)$ fire, each such $v$ at time $level(c)$. This shows Part 1 of Definition~\ref{def: recog-ff-multi}.

Second, suppose that $c \notin supp_{r_1}(B)$.
Then by Part 2 of Definition~\ref{def: recog-ff-single}, $rep(c)$ does not fire at time $level(c)$ in $\mathcal A$.
Then Part 2 of Definition~\ref{def: 2-way implementation} implies that, in $\mathcal D$, no neuron in $reps(c)$ fires at time $level(c)$.  This shows Part 2 of Definition~\ref{def: recog-ff-multi}.
\end{proof}

In Sections~\ref{sec: abstract-algorithms}-\ref{sec: detailed-algorithm-low}, we define particular abstract and detailed networks.  
We would like to apply Theorem~\ref{thm: general-mapping-theorem} to show that the detailed networks solve the multi-neuron recognition problem.
This would entail showing that the abstract networks solve the single-neuron recognition problem and that the detailed networks are related to the abstract networks using the implementation relation $\leq_{impl}$.

However, that is not quite how it worked out.
Instead of a single abstract network $\mathcal A$, we found it convenient to use two abstract networks $\mathcal{A}_1$ and $\mathcal{A}_2$, one to show the firing guarantee and one to show the non-firing guarantee.
We describe this alternative approach in the next subsection.

The difficulty with using the approach of this subsection seems to arise because our correctness requirements allow some uncertainty in the firing requirements.
Having at least $r_2$-support is supposed to guarantee firing, and not having $r_1$-support is supposed to guarantee non-firing.  
But there is a middle area, in which we have $r_1$-support but do not have $r_2$-support, in which firing is permitted to occur or not occur. 
But our definition of $\leq_{impl}$ requires that the exact firing behavior of the abstract network $\mathcal A$ be emulated in $\mathcal D$.
That seems to be requiring too much, leading us to the alternative approach of the next subsection.

\subsection{Second attempt}

Here is our second attempt at an approach to proving our correctness condition in Definition~\ref{def: recog-ff-multi} using implementation relationships between abstract and detailed networks.
This approach is based on splitting the requirements in Definition~\ref{def: recog-ff-multi} into two conditions, one for firing and one for non-firing.
We define two abstract networks, $\mathcal{A}_1$ and $\mathcal{A}_2$, and show that each satisfies one of the two conditions.
We define two notions of implementation, one preserving the firing guarantees and one preserving the non-firing guarantees.
We show that the same detailed network $\mathcal D$ implements $\mathcal{A}_1$ using the first notion of implementation, and implements $\mathcal{A}_2$ using the second notion of implementation.
Combining the two results gives us the combined correctness condition for $\mathcal D$, i.e., that $\mathcal D$ solves the recognition problem for multi-neuron representations.

We start by defining our two different implementation relationships between a detailed network $\mathcal{D}$ and an abstract network $\mathcal{A}$.
As before, $\mathcal{D}$ includes a fixed set $F$ of failed neurons, satisfying constraints that will be defined in Sections~\ref{sec: detailed-algorithm-high} and~\ref{sec: detailed-algorithm-low}.

\begin{definition}[$Implements_1$, $\leq_{impl1}$]
\label{def: impl-firing}
Network $\mathcal{D}$ $implements_1$ network $\mathcal{A}$ provided that, for every $B \subseteq C_0$, for the two unique executions of these two networks on $B$, the following holds: 
For every concept $c$, if $rep(c)$ fires at time $level(c)$ in $\mathcal{A}$, then in $\mathcal{D}$, at least $m (1-\epsilon)$ of the neurons $v \in reps(c)$ fire, each such $v$ at time $level(c)$.

We write $\mathcal{D} \leq_{impl1} \mathcal{A}$ as shorthand for $\mathcal{D} \ implements_1 \ \mathcal{A}$.
\end{definition}

\begin{definition}[$Implements_2$, $\leq_{impl2}$]
\label{def: impl-non-firing}
Network $\mathcal{D}$ $implements_2$ network $\mathcal{A}$ provided that, for every $B \subseteq C_0$, for the two unique executions of these two networks on $B$, the following holds: 
For every concept $c$, if $rep(c)$ does not fire at time $level(c)$ in $\mathcal{A}$, then in $\mathcal D$, no neuron $v \in reps(c)$ fires at time $level(c)$.

We write $\mathcal{D} \leq_{impl2} \mathcal{A}$ as shorthand for 
$\mathcal{D}$  $implements_2$  $\mathcal{A}$.
\end{definition}

Using the two implementation relationships $\leq_{impl1}$ and $\leq_{impl2}$, we obtain two theorems:

\begin{theorem}
\label{thm: general-mapping-theorem1}
Let $r_2$ be a real in $[0,1]$.
If  $\mathcal{A}$ guarantees $r_2$-firing for $\mathcal C$ according to Definition~\ref{def: firing-guarantee} and $\mathcal{D} \leq_{impl1} \mathcal{A}$, then $\mathcal{D}$ satisfies Part 1 of Definition~\ref{def: recog-ff-multi}.
\end{theorem}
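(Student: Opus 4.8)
The plan is to chain the two hypotheses together directly, exactly mirroring the firing half of the proof of Theorem~\ref{thm: general-mapping-theorem}. I would begin by assuming both hypotheses: that $\mathcal{A}$ guarantees $r_2$-firing for $\mathcal{C}$, and that $\mathcal{D} \leq_{impl1} \mathcal{A}$. To verify Part 1 of Definition~\ref{def: recog-ff-multi}, I would fix an arbitrary $B \subseteq C_0$ and suppose it is presented at time $0$ in both networks (as stipulated by the standing assumptions of Section~\ref{sec: implementation}), and then consider an arbitrary concept $c \in supp_{r_2}(B)$.

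The first key step is to invoke Definition~\ref{def: firing-guarantee}: since $\mathcal{A}$ guarantees $r_2$-firing and $c \in supp_{r_2}(B)$, the neuron $rep(c)$ fires in $\mathcal{A}$ at time $layer(rep(c))$, which equals $level(c)$ by the layer-to-level correspondence assumed in Section~\ref{sec: implementation}. The second key step applies Definition~\ref{def: impl-firing}: because $\mathcal{D} \leq_{impl1} \mathcal{A}$ and $rep(c)$ fires at time $level(c)$ in $\mathcal{A}$, at least $m(1-\epsilon)$ of the neurons $v \in reps(c)$ fire in $\mathcal{D}$, each at time $level(c)$. Since $layer(v) = level(c)$ for every $v \in reps(c)$ (again by the Section~\ref{sec: implementation} correspondence), this is precisely the conclusion demanded by Part 1 of Definition~\ref{def: recog-ff-multi}.

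There is no substantive obstacle here; the argument is a two-line composition of definitions, and the only bookkeeping to watch is the identification of firing times with concept levels via $layer(rep(c)) = level(c)$ and $layer(v) = level(c)$, which ensures that the ``time $level(c)$'' appearing in the implementation relation matches the ``time $layer(v)$'' appearing in the recognition requirement. The content of the theorem is really just that $\leq_{impl1}$ was defined so as to transport exactly the firing guarantee across the abstraction, so the statement reduces to unwinding Definitions~\ref{def: firing-guarantee}, \ref{def: impl-firing}, and~\ref{def: recog-ff-multi} in sequence.
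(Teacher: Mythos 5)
Your proposal is correct and follows essentially the same argument as the paper's own proof: fix $B$, take $c \in supp_{r_2}(B)$, apply Definition~\ref{def: firing-guarantee} to get that $rep(c)$ fires at time $level(c)$ in $\mathcal{A}$, and then apply Definition~\ref{def: impl-firing} to conclude that at least $m(1-\epsilon)$ of the neurons in $reps(c)$ fire at time $level(c)$ in $\mathcal{D}$. Your explicit bookkeeping of the identification $layer(rep(c)) = level(c) = layer(v)$ is a minor elaboration of details the paper leaves implicit via the standing assumptions of Section~\ref{sec: implementation}, not a different approach.
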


\begin{proof}
Assume that $\mathcal{A}$ guarantees $r_2$-firing for $\mathcal C$ and $\mathcal{D} \leq_{impl1} \mathcal{A}$.
We must show that $\mathcal{D}$ satisfies Part 1 of Definition~\ref{def: recog-ff-multi}.
Fix some particular $B \subseteq C_0$.

Suppose that $c \in supp_{r_2}(B)$. 
Then by Definition~\ref{def: firing-guarantee}, $rep(c)$ fires at time $level(c)$ in $\mathcal A$.
Then Definition~\ref{def: impl-firing} implies that, in $\mathcal D$, at least $m (1-\epsilon)$ of the neurons $v \in reps(c)$ fire, each at time $level(c)$.   This shows Part 1 of Definition~\ref{def: recog-ff-multi}.
\end{proof}

\begin{theorem}
\label{thm: general-mapping-theorem2}
Let $r_1$ be a real in $[0,1]$.  
If $\mathcal{A}$ guarantees $r_1$-non-firing for $\mathcal C$ according to Definition~\ref{def: non-firing-guarantee} and $\mathcal{D} \leq_{impl2} \mathcal{A}$, 
then $\mathcal{D}$ satisfies Part 2 of Definition~\ref{def: recog-ff-multi}.
\end{theorem}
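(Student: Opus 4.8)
The plan is to mirror exactly the composition argument used for Theorem~\ref{thm: general-mapping-theorem1}, simply replacing the firing guarantee and $\leq_{impl1}$ by the non-firing guarantee and $\leq_{impl2}$. The proof is a direct chaining of two implications: the non-firing guarantee of $\mathcal{A}$ (Definition~\ref{def: non-firing-guarantee}) tells us when $rep(c)$ is silent in $\mathcal{A}$, and the $implements_2$ relationship (Definition~\ref{def: impl-non-firing}) transfers that silence from $\mathcal{A}$ to all of $reps(c)$ in $\mathcal{D}$, which is precisely Part~2 of Definition~\ref{def: recog-ff-multi}.

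Concretely, I would first assume both hypotheses, namely that $\mathcal{A}$ guarantees $r_1$-non-firing for $\mathcal{C}$ and that $\mathcal{D} \leq_{impl2} \mathcal{A}$, and fix an arbitrary input set $B \subseteq C_0$ (presented at time $0$ in both networks, as stipulated in Section~\ref{sec: implementation}). Next I would take any concept $c$ with $c \notin supp_{r_1}(B)$ and apply Definition~\ref{def: non-firing-guarantee} to conclude that $rep(c)$ does not fire at time $layer(rep(c))$ in $\mathcal{A}$. Then I would invoke Definition~\ref{def: impl-non-firing}, whose hypothesis is exactly that $rep(c)$ does not fire in $\mathcal{A}$ at the relevant time, to obtain that no neuron $v \in reps(c)$ fires at that same time in $\mathcal{D}$. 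Since $c$ and $B$ were arbitrary, this establishes Part~2 of Definition~\ref{def: recog-ff-multi}.

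The only piece of bookkeeping that needs care is the alignment of the time indices across the three definitions, which are variously phrased using $layer(rep(c))$, $level(c)$, and $layer(v)$. This is resolved by the standing assumptions imposed at the start of Section~\ref{sec: implementation}: there we set $\ell'_{max} = \ell_{max}$ and require $layer(rep(c)) = level(c)$ in $\mathcal{A}$ and $layer(v) = level(c)$ for every $v \in reps(c)$ in $\mathcal{D}$. Under these identifications all three times coincide with $level(c)$, so the conclusion produced by $\leq_{impl2}$ matches the time index demanded by Part~2 of Definition~\ref{def: recog-ff-multi} verbatim. I do not expect any genuine obstacle here; the content of the theorem lies entirely in having factored the two-sided recognition requirement into a separate non-firing half and in having defined $\leq_{impl2}$ so that it transports exactly this half, and the present proof is just the short composition that cashes in that design choice.
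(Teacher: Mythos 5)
Your proposal is correct and follows essentially the same argument as the paper's proof: fix $B$, apply the $r_1$-non-firing guarantee of $\mathcal{A}$ to conclude $rep(c)$ is silent, then use $\leq_{impl2}$ to transfer that silence to all of $reps(c)$ in $\mathcal{D}$. Your extra remark about reconciling the time indices $layer(rep(c))$, $level(c)$, and $layer(v)$ via the standing assumptions of Section~\ref{sec: implementation} is a point the paper leaves implicit, but it does not change the substance of the argument.
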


\begin{proof}
Assume that $\mathcal{A}$ guarantees $r_1$-non-firing for $\mathcal C$ and $\mathcal{D} \leq_{impl2} \mathcal{A}$.
We must show that $\mathcal{D}$ satisfies Part 2 of Definition~\ref{def: recog-ff-multi}.
Fix some particular $B \subseteq C_0$.

Suppose that $c \notin supp_{r_1}(B)$.  
Then by Definition~\ref{def: non-firing-guarantee}, $rep(c)$ does not fire at time $level(c)$ in $\mathcal{A}$.
Then Definition~\ref{def: impl-non-firing} implies that, in $\mathcal D$, no neuron in $reps(c)$ fires at time $level(c)$.  This shows Part 2 of Definition~\ref{def: recog-ff-multi}.
\end{proof}

Now we consider two abstract networks $\mathcal{A}_1$ and $\mathcal{A}_2$, where $\mathcal{A}_1$ guarantees $r_2$-firing for $\mathcal C$ and $\mathcal{A}_2$ guarantees $r_1$-non-firing for $\mathcal C$.
Combining the two previous results, we get:

\begin{theorem}[\textbf{Combined mapping theorem}]
\label{thm:  general-mapping-theorem3}
Suppose that network $\mathcal{A}_1$ guarantees $r_2$-firing for $\mathcal C$ and network
$\mathcal{A}_2$ guarantees $r_1$-non-firing for $\mathcal C$.
Suppose that $\mathcal D \leq_{impl1} \mathcal{A}_1$ and $\mathcal D \leq_{impl2} \mathcal{A}_2$.
Then $\mathcal D$ $(r_1,r_2)$-recognizes $\mathcal C$ according to Definition~\ref{def: recog-ff-multi}.
\end{theorem}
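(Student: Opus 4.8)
The plan is to recognize that this theorem is nothing more than the conjunction of Theorem~\ref{thm: general-mapping-theorem1} and Theorem~\ref{thm: general-mapping-theorem2}, applied to the two abstract networks $\mathcal{A}_1$ and $\mathcal{A}_2$ respectively. Definition~\ref{def: recog-ff-multi} has exactly two parts, a firing requirement (Part 1) and a non-firing requirement (Part 2), and the two hypotheses of the present theorem supply precisely the ingredients that each of the two earlier theorems needs. So the argument should decouple cleanly into two independent applications, with no shared reasoning between them, and then a final assembly step.

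Concretely, I would first discharge Part 1 of Definition~\ref{def: recog-ff-multi}. The hypotheses state that $\mathcal{A}_1$ guarantees $r_2$-firing for $\mathcal C$ and that $\mathcal D \leq_{impl1} \mathcal{A}_1$; these are exactly the two hypotheses of Theorem~\ref{thm: general-mapping-theorem1}, instantiating its generic abstract network $\mathcal A$ as $\mathcal{A}_1$. That theorem then immediately yields that $\mathcal D$ satisfies Part 1 of Definition~\ref{def: recog-ff-multi}. Symmetrically, the hypotheses that $\mathcal{A}_2$ guarantees $r_1$-non-firing for $\mathcal C$ and that $\mathcal D \leq_{impl2} \mathcal{A}_2$ match the hypotheses of Theorem~\ref{thm: general-mapping-theorem2}, instantiating $\mathcal A$ as $\mathcal{A}_2$, so that theorem gives Part 2 of Definition~\ref{def: recog-ff-multi}. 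Since Definition~\ref{def: recog-ff-multi} consists of exactly these two parts, having established both I conclude that $\mathcal D$ $(r_1,r_2)$-recognizes $\mathcal C$.

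I do not anticipate any genuine obstacle here, since the whole purpose of the ``second attempt'' machinery was precisely to split the monolithic recognition requirement into a firing half and a non-firing half so that one detailed network could be related, via two different implementation relations, to two different abstract networks. The only point worth a sentence of care is confirming that the two applications do not interfere: this holds because the two parts concern disjoint conclusions (firing of at least $m(1-\epsilon)$ of $reps(c)$ for $r_2$-supported $c$, versus complete non-firing of $reps(c)$ for $c$ that is not $r_1$-supported), each routed through its own implementation relation, with $r_2$ carried entirely by the firing side and $r_1$ entirely by the non-firing side, exactly as the two source theorems require.
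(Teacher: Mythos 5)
Your proposal is correct and is exactly the paper's argument: the paper proves this theorem as ``Immediate from Theorems~\ref{thm: general-mapping-theorem1} and~\ref{thm: general-mapping-theorem2},'' which is precisely your two independent applications (instantiating $\mathcal A$ as $\mathcal{A}_1$ for Part 1 and as $\mathcal{A}_2$ for Part 2 of Definition~\ref{def: recog-ff-multi}). Your additional remark that the two parts concern disjoint conclusions and cannot interfere is a fine, if unnecessary, elaboration of the same reasoning.
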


\begin{proof}
Immediate from Theorems~\ref{thm: general-mapping-theorem1} and~\ref{thm: general-mapping-theorem2}. 
\end{proof}

\section{Abstract Networks}
\label{sec: abstract-algorithms}

In this section, we define two abstract networks, $\mathcal{A}_1$ and $\mathcal{A}_2$.  They are identical except for different firing thresholds.
In Sections~\ref{sec: detailed-algorithm-high} and~\ref{sec: detailed-algorithm-low}, $\mathcal{A}_1$ and $\mathcal{A}_2$ will serve as abstract networks for two detailed networks, $\mathcal H$ and $\mathcal L$, with high connectivity and low connectivity respectively.
We will use $\mathcal{A}_1$ for proving firing guarantees and $\mathcal{A}_2$ for non-firing guarantees.

We assume that, in $\mathcal{A}_1$ and $\mathcal{A}_2$, $\ell'_{max} = \ell_{max}$.
Also, every concept $c \in C$ has one representing neuron, $rep(c)$, as described in Section~\ref{sec: recog-single}, with $layer(rep(c)) = level(c)$.
The $rep$ functions may be different for $\mathcal{A}_1$ and $\mathcal{A}_2$.  
$\mathcal {A}_1$ and $\mathcal{A}_2$ do not admit any neuron failures.

For each of these networks, we assume:
\begin{itemize}
    \item $r_1 \leq r_2$.
    \item  For each neuron $v$ of the form $rep(c)$ with $level(c) \geq 1$, there are weight $1$ edges from all $reps$ of children of $c$ to $v$.  All other edges have weight $0$.     
\end{itemize}
The assumption that $r_1 \leq r_2$ is sufficient to argue correctness of the two abstract networks $\mathcal{A}_1$ and $\mathcal{A}_2$ on their own.
Later, in order to use them as abstractions of the detailed networks $\mathcal H$ and $\mathcal L$, we will need something stronger:  an appropriate gap between $r_1$ and $r_2$ to accommodate failures and limited connectivity.\footnote{
For network $\mathcal H$, we must take account of failures, but not of disconnections.
So we will assume there that $r_1 \leq r_2 (1 - \epsilon)$.
For network $\mathcal L$, we will need a greater gap, $r_1 \leq a r_2 (1 - \epsilon)$, where $a$ is some constant $< 1$, to handle limited connectivity as well as neuron failures.
}
We assume weight $1$ edges from $reps$ of children to $reps$ of their parents.

\subsection{Abstract Network $\mathcal{A}_1$}

For $\mathcal{A}_1$, we make the additional assumption:
\begin{itemize}
    \item $\tau = r_2 k$.  
\end{itemize}
Thus, we assume a fairly high firing threshold.

Recognition in network $\mathcal{A}_1$ proceeds as in~\cite{LM21}.
That is, we start by presenting a set $B \subseteq C_0$ at time $0$.\footnote{This requires that layer $0$ contain at least $|C_0|$ neurons.  From now on, we will just assume that we have enough neurons to represent all the concepts.} 
More precisely, we assume that the $rep$ of each level $0$ concept in $B$ fires at time $0$, and only at time $0$, and no other layer $0$ neurons ever fire.  This leaves us with a deterministic system.
Then the neuron firing propagates up the layers, one time unit per layer, according to the deterministic firing rule defined in Section~\ref{sec: network-operation}.

The global states of $\mathcal{A}_1$ consist of the following components:
\begin{itemize}
    \item 
    For each neuron $u$ in the network, $firing^u \in \{ 0, 1\}$.  Initially, the layer $0$ neurons' $firing$ components are set according to the definition of presenting $B$ for single-neuron representations.  All the higher-layer neurons have their initial $firing$ components set to $0$.
    \item  
    For each neuron $v$ with $layer(v) = \ell \geq 1$, a mapping $weight^v$ from neurons at layer $\ell-1$ to $\{0,1\}$.
    For each neuron $v$ of the form $rep(c)$, we have weight $1$ edges from all $reps$ of children of $c$ to $v$.  All other edges have weight $0$.  That is, $weight^v(u) = 1$ exactly if $u = rep(c')$ for some child $c'$ of $c$.
\end{itemize} 
We omit the $failed$ components, since we are not considering failures here.

The transitions are also global.  
The $weight$ components don't change.
For the $firing$ flags, the values for the layer $0$ neurons are $0$ after any transition, since we are assuming that inputs are presented only at time $0$.
The value for each higher-layer neuron $v$ after a transition is determined based on the incoming potential:  if that is at least $r_2 k$, then the new value of $firing^v$ is $1$, and otherwise it is $0$.

\begin{theorem}
\label{thm: A1-firing-guarantee}
$\mathcal{A}_1$ guarantees $r_2$-firing for $\mathcal C$, as in Definition~\ref{def: firing-guarantee}.
\end{theorem}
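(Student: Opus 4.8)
The plan is to prove the firing guarantee by induction on the level $\ell = level(c)$ of the concept $c$, exploiting the fact that in $\mathcal{A}_1$ the firing wave propagates exactly one layer per time step, so that the claim ``$rep(c)$ fires at time $level(c)$'' can be established layer by layer. First I would unwind Definition~\ref{def: firing-guarantee}: fixing $B \subseteq C_0$ presented at time $0$, I must show that every $c \in supp_{r_2}(B)$ has $rep(c)$ firing at time $level(c)$. Since the sets $C_\ell$ are disjoint and $supp_{r_2}(B) = \bigcup_\ell B(\ell)$ with $B(\ell) \subseteq C_\ell$, membership of a level-$\ell$ concept $c$ in $supp_{r_2}(B)$ is equivalent to $c \in B(\ell)$. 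Hence the induction hypothesis I carry is precisely that for every $c' \in B(\ell-1)$, the neuron $rep(c')$ fires at time $\ell-1$.

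For the base case $\ell = 0$: if $c \in B(0) = B \cap C_0$ then $c \in B$, and since $B$ is presented at time $0$, Definition~\ref{def: presented1} gives that $rep(c) \in reps(B)$ fires at time $0 = level(c)$. For the inductive step, take $c \in B(\ell)$ with $\ell \geq 1$. By definition of $B(\ell)$ we have $|children(c) \cap B(\ell-1)| \geq r_2 k$, and each such child $c'$ lies in $B(\ell-1)$, so by the inductive hypothesis $rep(c')$ fires at time $\ell-1$. I would then compute the potential $pot^{rep(c)}(\ell)$, which by the weight assignment of $\mathcal{A}_1$ (weight $1$ exactly on edges from $reps$ of children of $c$, weight $0$ elsewhere) equals the number of children $c'$ of $c$ whose $rep(c')$ fires at time $\ell-1$; this count is at least $|children(c) \cap B(\ell-1)| \geq r_2 k = \tau$. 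The firing rule of Section~\ref{sec: network-operation} then forces $firing^{rep(c)}(\ell) = 1$, completing the step.

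The argument is almost entirely routine, so the only points requiring care are bookkeeping rather than genuine obstacles. The first is the time alignment: $pot^{rep(c)}(\ell)$ depends on the firing pattern at time $\ell-1$, which is exactly where the inductive hypothesis delivers firing of the children's $reps$, so the layer-by-layer propagation matches the time index cleanly. The second is that I only ever need a \emph{lower} bound on the potential: neurons firing at layer $\ell-1$ that are not $reps$ of children of $c$ contribute weight $0$, and any further firing $reps$ of children can only increase the count, so it suffices to restrict attention to the children in $B(\ell-1)$ and ignore everything else in order to reach the threshold. I expect no substantive difficulty here; the value of the statement is mainly as the clean ``firing half'' that will later be paired, via Theorem~\ref{thm: general-mapping-theorem1}, with the corresponding non-firing guarantee for $\mathcal{A}_2$.
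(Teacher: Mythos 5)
Your proposal is correct and follows essentially the same route as the paper's own (sketched) proof: induction on levels, where the inductive step uses the definition of $r_2$-support to get at least $r_2 k$ firing children $reps$, each contributing weight-$1$ potential, meeting the threshold $\tau = r_2 k$. Your write-up merely fills in the bookkeeping the paper leaves implicit (the base case and the identification of level-$\ell$ membership in $supp_{r_2}(B)$ with membership in $B(\ell)$), which is a faithful elaboration rather than a different argument.
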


\begin{proof} (Sketch)
We can argue correctness as in~\cite{LM21,DBLP:conf/sirocco/LynchM23}.
Here we need to show that, for any concept $c$ in $\mathcal C$ that is $r_2$-supported by the presented set $B$, the neuron $rep(c)$ fires at time $level(c)$.  
This can be shown using an inductive proof on levels.

The key is the inductive step, where we consider a concept $c$ at level $\ell \geq 1$ that is $r_2$-supported by $B$.
Then by the definition of $r_2$-supported, it must be that at least $r_2 k$ of $c$'s children are $r_2$-supported by $B$.
By the inductive hypothesis, all of the $reps$ of these children fire at time $\ell-1$.
Since the edges from these $reps$ to $rep(c)$ all have weight $1$, the total incoming potential to $rep(c)$ is at least $r_2 k$.
This meets the firing threshold $r_2 k$ for $rep(c)$ and ensures that $rep(c)$ fires at time $\ell$.
\end{proof}

In fact, we can show the following stronger two-directional result, though we will not use it in this paper.  The proof is as in~\cite{LM21,DBLP:conf/sirocco/LynchM23}.

\begin{theorem}
\label{thm: A1-both-dirs}
$\mathcal{A}_1$ $(r_1,r_2)$-recognizes $\mathcal C$, as in Definition~\ref{def: recog-ff-single}.
\end{theorem}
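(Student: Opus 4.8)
The plan is to dispatch the two parts of Definition~\ref{def: recog-ff-single} separately. Part~1, the firing direction, is exactly the content of Theorem~\ref{thm: A1-firing-guarantee}, so I would simply invoke it. All the real work lies in Part~2, the non-firing direction: assuming $B \subseteq C_0$ is presented at time $0$ and $c \notin supp_{r_1}(B)$, I must show that $rep(c)$ does not fire at time $level(c)$. It will be cleanest to introduce, for each threshold $r$, the notation $B_r(\ell)$ for the level-$\ell$ support sets of Definition~\ref{def:support}, so that $supp_r(B) = \bigcup_\ell B_r(\ell)$ with each $B_r(\ell) \subseteq C_\ell$.

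The key is to prove a converse to the firing lemma: for every concept $c$, if $rep(c)$ fires at time $level(c)$, then $c \in supp_{r_2}(B)$. I would prove this by induction on $\ell = level(c)$, mirroring the induction of Theorem~\ref{thm: A1-firing-guarantee} but run in the opposite direction. For the base case $\ell = 0$, the presentation convention (Definition~\ref{def: presented1}) forces $rep(c)$ to fire at time $0$ exactly when $c \in B = B_{r_2}(0)$, so firing gives $c \in supp_{r_2}(B)$. For the inductive step with $\ell \geq 1$, firing of $rep(c)$ at time $\ell$ means its potential is at least the threshold $\tau = r_2 k$. Since the only weight-$1$ edges into $rep(c)$ come from the $reps$ of children of $c$ (all other edges have weight $0$), and these children lie in layer $\ell - 1$, the potential equals the number of children $c'$ of $c$ whose $rep(c')$ fires at time $\ell - 1$. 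Hence at least $r_2 k$ such children fire, and by the inductive hypothesis each such $c'$ lies in $B_{r_2}(\ell-1)$; therefore $|children(c) \cap B_{r_2}(\ell-1)| \geq r_2 k$, which is precisely the condition placing $c$ in $B_{r_2}(\ell) \subseteq supp_{r_2}(B)$.

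The second ingredient is monotonicity of support in the threshold: since $r_1 \leq r_2$, one has $supp_{r_2}(B) \subseteq supp_{r_1}(B)$. I would establish this by a short induction on $\ell$ showing $B_{r_2}(\ell) \subseteq B_{r_1}(\ell)$, using that the defining inequality $|children(c) \cap B(\ell-1)| \geq r k$ is easier to satisfy for the smaller threshold $r_1 k$ and that the child sets already nest by the inductive hypothesis. Combining the two ingredients yields, by contraposition, the non-firing guarantee: if $rep(c)$ fires at time $level(c)$ then $c \in supp_{r_2}(B) \subseteq supp_{r_1}(B)$, so the hypothesis $c \notin supp_{r_1}(B)$ forces $rep(c)$ not to fire.

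I expect the main (mild) obstacle to be bookkeeping the two thresholds correctly: the firing behaviour of $\mathcal{A}_1$ is governed by $r_2$ through $\tau = r_2 k$, so the converse-firing induction naturally lands in $supp_{r_2}(B)$, and it is the separate monotonicity step that bridges back to the $r_1$ appearing in the non-firing hypothesis. Care is also needed to confirm that at time $\ell - 1$ only the layer-$(\ell-1)$ firing pattern influences $rep(c)$, and that the $reps$ of $c$'s children are exactly the layer-$(\ell-1)$ neurons with nonzero weight into $rep(c)$; both facts follow from the feed-forward structure, the assumption $layer(rep(c)) = level(c)$, and the weight convention of $\mathcal{A}_1$.
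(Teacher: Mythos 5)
Your proposal is correct, but it is worth noting that the paper itself gives no proof of this theorem: it simply says the proof is ``as in'' the earlier papers \cite{LM21,DBLP:conf/sirocco/LynchM23}, and the in-style argument it suggests (compare the proof sketch of Theorem~\ref{thm: A2-non-firing-guarantee}) would handle the non-firing half by a \emph{direct} induction: if $c \notin supp_{r_1}(B)$ then fewer than $r_1 k$ children are $r_1$-supported, so by the inductive hypothesis the potential at $rep(c)$ is strictly below $r_1 k \leq r_2 k = \tau$, and $rep(c)$ stays silent --- the hypothesis $r_1 \leq r_2$ being consumed inline at the threshold comparison. You instead decompose the non-firing half into two separate inductions: a converse-firing lemma (firing of $rep(c)$ at time $level(c)$ implies $c \in supp_{r_2}(B)$, i.e., the contrapositive of an $r_2$-non-firing guarantee for $\mathcal{A}_1$) and monotonicity of support in the threshold ($supp_{r_2}(B) \subseteq supp_{r_1}(B)$ when $r_1 \leq r_2$), combined by contraposition at the end. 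Both routes are sound and of comparable length; the direct induction is slightly leaner, while yours buys an exact characterization of firing in $\mathcal{A}_1$ ($rep(c)$ fires at time $level(c)$ if and only if $c \in supp_{r_2}(B)$) and yields the non-firing guarantee simultaneously for every $r_1 \leq r_2$, which incidentally substantiates the paper's remark that any threshold in $[r_1 k, r_2 k]$ would work for both parts. Your attention to the integrality of the potential (so that potential $\geq r_2 k$ and the support condition $|children(c) \cap B(\ell-1)| \geq r_2 k$ agree) and to the fact that only reps of children of $c$ carry weight-$1$ edges into $rep(c)$ covers the only points where such an argument could silently go wrong.
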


\subsection{Abstract Network $\mathcal{A}_2$}

This is exactly the same as $\mathcal{A}_1$, but with a somewhat lower firing threshold:
\begin{itemize}
    \item $\tau = r_1 k$.
\end{itemize}

The network operates in the same way as $\mathcal{A}_1$. 
That is, we start by presenting a set $B \subseteq C_0$ at time $0$.  
More precisely, we assume that the $rep$ of each level $0$ concept in $B$ fires at time $0$, and only at time $0$, and no other layer $0$ neurons ever fire.  This leaves us with a deterministic system.
Then the neuron firing propagates up the layers, one time unit per layer, according to the deterministic firing rule defined in Section~\ref{sec: network-operation}.

The global states again consist of:
\begin{itemize}
    \item 
    For each neuron $u$ in the network, $firing^u \in \{ 0, 1\}$.  Initially, the layer $0$ neurons' $firing$ components are set according to the definition of presenting $B$ for single-neuron representations.  All the higher-layer neurons have their initial $firing$ components set to $0$.
    \item  
    For each neuron $v$ with $layer(v) = \ell \geq 1$, a mapping $weight^v$ from neurons at layer $\ell-1$ to $\{0,1\}$.
    For each neuron $v$ of the form $rep(c)$, we have weight $1$ edges from all  $reps$ of children of $c$ to $v$.  All other edges have weight $0$.  That is, $weight^v(u) = 1$ exactly if $u = rep(c')$ for some child $c'$ of $c$.
\end{itemize} 
We again omit the $failed$ components.
The transitions are exactly as for $\mathcal{A}_1$. 

\begin{theorem}
\label{thm: A2-non-firing-guarantee}
Network $\mathcal{A}_2$ guarantees $r_1$-non-firing for $\mathcal C$, as in Definition~\ref{def: non-firing-guarantee}.
\end{theorem}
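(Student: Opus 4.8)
The plan is to prove the contrapositive of the non-firing guarantee by induction on the level. That is, I would show that for every concept $c \in C$, if $rep(c)$ fires at time $level(c)$ in the unique execution of $\mathcal{A}_2$ on a presented set $B \subseteq C_0$, then $c \in supp_{r_1}(B)$. Taking the contrapositive of this statement yields exactly the condition of Definition~\ref{def: non-firing-guarantee} with threshold parameter $r_1$, since by Definition~\ref{def:support} a level-$\ell$ concept lies in $supp_{r_1}(B)$ precisely when it lies in $B(\ell)$.

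For the base case $\ell = 0$, I would invoke the presentation convention (Definition~\ref{def: presented1}): a layer $0$ neuron fires at time $0$ only if it lies in $reps(B)$. Hence if $rep(c)$ fires at time $0$ for $c \in C_0$, then $c \in B$, so $c \in B \cap C_0 = B(0) \subseteq supp_{r_1}(B)$.

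For the inductive step, fix $c \in C_\ell$ with $\ell \geq 1$ and assume the claim at level $\ell - 1$. The key computation is that the incoming potential to $rep(c)$ at time $\ell$ equals the number of children $c'$ of $c$ whose $rep(c')$ fires at time $\ell - 1$: by the weight assumption for $\mathcal{A}_2$, $weight^{rep(c)}(u) = 1$ exactly when $u = rep(c')$ for a child $c'$ of $c$, and all other incoming weights are $0$, so only firing child-$reps$ contribute, each contributing $1$. Since $rep(c)$ fires at time $\ell$, this potential is at least the threshold $\tau = r_1 k$, so at least $r_1 k$ children of $c$ have their $rep$ firing at time $\ell - 1$. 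Each such child $c'$ is a level-$(\ell-1)$ concept, so by the inductive hypothesis $c' \in supp_{r_1}(B)$, whence $c' \in B(\ell - 1)$. Therefore $|children(c) \cap B(\ell-1)| \geq r_1 k$, which by Definition~\ref{def:support} gives $c \in B(\ell) \subseteq supp_{r_1}(B)$, completing the induction.

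I expect the main obstacle to be bookkeeping rather than anything conceptual: one must verify carefully that the potential is \emph{exactly} the count of firing child-$reps$ — that non-child neurons at layer $\ell-1$ cannot contribute, because their edges into $rep(c)$ have weight $0$ — and that the relevant firing events at time $\ell-1$ are precisely the ones feeding the activation at time $\ell$. This argument mirrors the firing argument for $\mathcal{A}_1$ in Theorem~\ref{thm: A1-firing-guarantee}, run in the opposite direction and with threshold $r_1 k$ in place of $r_2 k$; notably, the assumption $r_1 \leq r_2$ is not needed for this half, since $\mathcal{A}_2$ is analyzed purely in terms of its own threshold.
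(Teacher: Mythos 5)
Your proof is correct and takes essentially the same approach as the paper: an induction on levels in which the incoming potential to $rep(c)$ is identified with the number of firing child-$reps$ and compared against the threshold $r_1 k$. The only cosmetic difference is that you phrase the inductive claim contrapositively (``fires $\Rightarrow$ $r_1$-supported'') while the paper argues the non-firing direction directly (``not $r_1$-supported $\Rightarrow$ does not fire''); the content of each inductive step is identical.
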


\begin{proof} (Sketch)
Now we must show that, for any concept $c \in \mathcal{C}$ that is not $r_1$-supported by $B$, the neuron $rep(c)$ does not fire at time $level(c)$.
Again this can be shown using induction on levels.

For the inductive step, we consider a concept $c$ at level $\ell \geq 1$ that is not $r_1$-supported by $B$.  
Then it must be that strictly fewer than $r_1 k$ of $c$'s children are $r_1$-supported by $B$.
By the inductive hypothesis, the $reps$ of only these children could fire at time $\ell - 1$.
Therefore, the total incoming potential to $rep(c)$ is strictly less than $r_1 k$.
This does not meet the firing threshold $r_1 k$ for $rep(c)$, which ensures that $rep(c)$ does not fire at time $\ell$.
\end{proof}

Again, we can show the stronger two-directional result, though we will not use it in this paper. 
The proof is as in~\cite{LM21,DBLP:conf/sirocco/LynchM23}.

\begin{theorem}
\label{thm: A2-both-dirs}
$\mathcal{A}_2$ $(r_1,r_2)$-recognizes $\mathcal C$, as in Definition~\ref{def: recog-ff-single}.
\end{theorem}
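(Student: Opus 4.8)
The plan is to establish the two directions of Definition~\ref{def: recog-ff-single} separately, observing that one of them is already in hand. Part 2, the non-firing requirement, is exactly the statement of Theorem~\ref{thm: A2-non-firing-guarantee}: since $\mathcal{A}_2$ uses threshold $\tau = r_1 k$, that theorem already shows that every $c \notin supp_{r_1}(B)$ has $rep(c)$ not firing at time $level(c)$. So the only new work is Part 1, the firing requirement, which asks that every $c \in supp_{r_2}(B)$ have $rep(c)$ firing at time $level(c)$.

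For Part 1, I would argue by induction on the level $\ell = level(c)$, mirroring the proof of Theorem~\ref{thm: A1-firing-guarantee}. The inductive claim is that for every concept $c$ with $c \in supp_{r_2}(B)$, the neuron $rep(c)$ fires at time $level(c)$. For the base case $\ell = 0$, membership $c \in supp_{r_2}(B) \cap C_0 = B(0) = B$ (using $B \subseteq C_0$) gives $rep(c) \in reps(B)$, so by the presentation convention (Definition~\ref{def: presented1}) the neuron $rep(c)$ fires at time $0$. For the inductive step with $\ell \geq 1$, the definition of $r_2$-support (Definition~\ref{def:support}) yields $|children(c) \cap B(\ell-1)| \geq r_2 k$, i.e., at least $r_2 k$ children of $c$ are $r_2$-supported. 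By the inductive hypothesis each such child's $rep$ fires at time $\ell - 1$, and since all edges from $reps$ of children to $rep(c)$ have weight $1$, the potential at $rep(c)$ at time $\ell$ is at least $r_2 k$.

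The key, and only, new point relative to $\mathcal{A}_1$ is that $\mathcal{A}_2$ uses the lower threshold $r_1 k$ rather than $r_2 k$. Since $r_1 \leq r_2$, we have $r_2 k \geq r_1 k = \tau$, so an incoming potential of at least $r_2 k$ still meets $\mathcal{A}_2$'s threshold, and $rep(c)$ fires at time $\ell$. This completes the induction and hence Part 1. Combining Part 1 with Part 2 (Theorem~\ref{thm: A2-non-firing-guarantee}) gives that $\mathcal{A}_2$ $(r_1,r_2)$-recognizes $\mathcal C$.

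I do not expect any genuine obstacle: lowering the threshold can only make the firing direction easier, so the firing induction carries over essentially unchanged, modulo the single inequality $r_1 k \leq r_2 k$; and the delicate direction for a low threshold, namely non-firing, has already been handled carefully by Theorem~\ref{thm: A2-non-firing-guarantee}. The only thing to watch is that the two inductions are logically independent, since the firing induction uses only the firing inductive hypothesis on $r_2$-supported children while the non-firing induction uses only the non-firing hypothesis on non-$r_1$-supported children. Hence no simultaneous induction is needed and there is no circularity in combining the two parts.
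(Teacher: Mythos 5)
Your proof is correct and is essentially the argument the paper intends: the paper gives no explicit proof of this theorem (it defers to~\cite{LM21,DBLP:conf/sirocco/LynchM23}), but its remark that any threshold in the range $[r_1 k, r_2 k]$ would work for both parts of Definition~\ref{def: recog-ff-single} is exactly your key observation that an incoming potential of at least $r_2 k$ still clears the lower threshold $\tau = r_1 k$. Your decomposition---reuse Theorem~\ref{thm: A2-non-firing-guarantee} for the non-firing direction, rerun the firing induction of Theorem~\ref{thm: A1-firing-guarantee} with the single inequality $r_1 k \leq r_2 k$, and note that the two inductions are logically independent so there is no circularity---is the standard argument and fills in what the paper leaves implicit.
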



Theorems~\ref{thm: A1-both-dirs} and~\ref{thm: A2-both-dirs} say that each of $\mathcal{A}_1$ and $\mathcal{A}_2$ satisfies both parts of the correctness condition for recognition, in Definition~\ref{def: recog-ff-single}  .
In fact, we have enough freedom in setting the thresholds so that any threshold in the range $[r_1 k, r_2 k]$ would also work correctly, for both parts.

The reason we separate these two conditions here is that we will use the separation in showing correctness for our detailed networks $\mathcal H$ and $\mathcal L$.
For each of these, we will use two separate implementation relationships, mapping to $\mathcal{A}_1$ via $\leq_{impl1}$ to show the firing condition and mapping to $\mathcal{A}_2$ via $\leq_{impl2}$ to show the non-firing condition.\footnote{It might be possible to define a single nondeterministic abstract network that can be used for both parts, but that seems more complicated and we haven't done this.}

\section{Detailed Network for a Network Model with High Connectivity}
\label{sec: detailed-algorithm-high}

We are finally ready to describe our first detailed network, $\mathcal H$.  It is based on a feed-forward network model with total connectivity from each layer $\ell-1$ to layer $\ell$.
In $\mathcal H$, $\ell'_{max} = \ell_{max}$.  
Also, every concept $c \in C$ has exactly $m$ representing neurons, $reps(c)$, as described in Section~\ref{sec: recog-multi}, and every neuron $v \in reps(c)$ has $layer(v) = level(c)$.
Network $\mathcal H$ includes a set $F$ of failed neurons, subject to constraints that we define below.

Network $\mathcal H$ is a deterministic version of the first probabilistic network in~\cite{lynch2024multineuron}.  
In~\cite{lynch2024multineuron}, we gave a direct proof that the network of that paper works correctly, that is, that it $(r_1,r_2)$ recognizes $\cal C$.
The definition of $(r_1,r_2)$ recognition in that paper is probabilistic, asserting recognition only with high probability.\footnote{Also, as noted earlier, in~\cite{lynch2024multineuron} we assumed a variant of our concept hierarchy definition, with only one root.  The proofs in that paper could be extended to the more general case, with suitably modified probability bounds.}

In this paper, we give a different style of proof for $\mathcal H$, based on implementation relationships $\leq_{impl1}$ and $\leq_{impl2}$ from $\mathcal H$ to our abstract networks $\mathcal{A}_1$ and $\mathcal{A}_2$, respectively.
We know from Theorem~\ref{thm: A1-firing-guarantee} that $\mathcal{A}_1$ guarantees $r_2$-firing for $\mathcal C$, and from Theorem~\ref{thm: A2-non-firing-guarantee} that $\mathcal{A}_2$ guarantees $r_1$-non-firing for $\mathcal C$.
We argue here that the implementations carry over these correctness properties to the detailed network $\mathcal H$.
This argument is based on the combined mapping theorem, Theorem~\ref{thm:  general-mapping-theorem3}.

The first thing we do here, compared to~\cite{lynch2024multineuron}, is to remove all the probabilistic choices from the network, in order to produce a deterministic network.
The only probability in the network of~\cite{lynch2024multineuron} is in the choice of the set of neurons that fail:  each neuron fails independently, with the same probability $q$.
Here we assume simply that some arbitrary set $F$ of neurons fail, subject to some constraints, namely, that the number of surviving (i.e., non-failing) neurons among the $reps$ of any $c$ is at least $m (1 - \epsilon)$, where $\epsilon$ is a recognition approximation parameter.
The first "survival lemma" in~\cite{lynch2024multineuron}, Lemma 5.2, says that this bound is achieved in the network of~\cite{lynch2024multineuron} with high probability.

Thus, we rely on a survival lemma from~\cite{lynch2024multineuron} to argue that the probabilistic version of the network satisfies our desired constraints on the set $F$ of failed neurons.
Here, we just assume the constraints and forget about the probability.
Once we fix $F$, the behavior of the network is completely determined by the input set $B$.

\subsection{The network $\mathcal{H}$}

We use the following assumptions:
\begin{itemize}
\item 
$\tau = r_2 k m (1- \epsilon)$.
That is, compared to what we used for $\mathcal{A}_1$, the threshold now includes the number $m$ of $reps$ of each concept, as well as a factor of $1 - \epsilon$ that is used in bounding the number of failures.\footnote{To make our results a bit more general, we could use an arbitrary threshold in the range ${r_1 k m (1 - \epsilon), r_2 k m (1 - \epsilon)]}$ instead of just the higher threshold $r_2 k m (1 - \epsilon)$.  The proof would be mostly unchanged.}
\item 
$r_1 \leq r_2 (1 - \epsilon)$.
\item 
For each neuron $v$ of the form $rep(c)$ with $level(c) \geq 1$, there are weight $1$ edges from all $reps$ of children of $c$ to $v$.  All other edges have weight $0$. 
\item 
$F$ is the set of failed neurons.  We assume that the number of surviving (i.e., non-failed) neurons among the $reps$ of any $c$ is at least $m (1 - \epsilon)$.
For non-$rep$ neurons, we have no constraints on their failures, but we use the fixed set $F$ in order to ensure that the network's behavior is deterministic, for a given input $B$.
\end{itemize}
Thus, we assume a fairly high firing threshold and limited failures.
Now we assume a gap between $r_1$ and $r_2$, namely, $r_1 \leq r_2 (1 - \epsilon)$, in order to accommodate neuron failures.
We assume weight $1$ edges between $reps$ of children and $reps$ of their parents.

The network operates as in~\cite{lynch2024multineuron}.
That is, we start by presenting a set $B \subseteq C_0$ at time $0$.
More precisely, we assume that all surviving $reps$ of each level $0$ concept in $B$ fire at time $0$, and only at time $0$, and no other layer $0$ neurons fire at any time.  This leaves us with a deterministic system.
Then the neuron firing propagates up the layers, one time unit per layer, according to the deterministic firing rule in Section~\ref{sec: network-operation}.

The global states of $\mathcal H$ consist of the following components:
\begin{itemize}
    \item
    For each neuron $u$ in the network, $failed^u \in \{0,1\}$. 
    The initial values are set according to $F$:  $failed^u = 1$ if and only if $u \in F$.
    \item 
    For each neuron $u$ in the network, $firing^u \in \{ 0, 1 \}$.  
    Initially, the layer $0$ neurons’ $firing$ components are set according to the definition of presenting $B$ for multi-neuron representations. 
    That is, if $u \in reps(c) - F$ then $u$ fires at time $0$, whereas 
    if $u \notin reps(c)$ or $u \in F$, then $u$ does not fire at time $0$.
    All the higher-layer neurons have their initial $firing$ components set to $0$.
    \item 
    For each neuron $v$ with $layer(v) = \ell \geq 1$, a mapping $weight^v$ from neurons at layer $\ell-1$ to $\{0,1\}$. For each neuron $v \in reps(c)$, we have weight $1$ edges from all $reps$ of children of $c$ to $v$.  All other edges have weight $0$.  That is, $weight^v(u) = 1$ exactly if $u \in reps(c')$ for some child $c'$ of $c$.
\end{itemize}

The $firing$ and $weight$ components are analogous to the corresponding components of $\mathcal{A}_1$ and $\mathcal{A}_2$, except that now we consider multiple $reps$ for each concept $c$.  In particular, we have weight-$1$ connectivity between all of the $reps$ of concepts and $reps$ of their children.
The $failed$ component is new.

As for the transitions:
The $weight$ and $failed$ components don't change.  
For the $firing$ flags, the values for the layer $0$ neurons are $0$ after any transition.
The $firing$ values of higher-layer neurons are computed based on potential, with a sharp threshold of $r_2 k m (1-\epsilon)$.  The failed neurons do not contribute to this potential.

\subsection{$\mathcal{H}$ implements $\mathcal{A}_1$}

In this section, we establish an implementation relationship between the detailed network $\mathcal H$ and the abstract network $\mathcal{A}_1$.
We use this to carry over firing guarantees from $\mathcal{A}_1$ to $\mathcal H$, in the manner described by Theorem~\ref{thm: general-mapping-theorem1}.

We assume a fixed set $F$ of failed nodes in $\mathcal H$, subject to the constraint that the number of surviving neurons among the $reps$ of any $c$ is at least $m (1 - \epsilon)$.
We show:

\begin{theorem}
\label{thm: H-firing-guarantee}
$\mathcal{H} \leq_{impl1} \mathcal{A}_1$.  
\end{theorem}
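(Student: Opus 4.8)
The plan is to unwind Definition~\ref{def: impl-firing} and argue by induction on the level $\ell = level(c)$ that, for any fixed input $B \subseteq C_0$, whenever $rep(c)$ fires at time $\ell$ in $\mathcal{A}_1$, at least $m(1-\epsilon)$ of the neurons in $reps(c)$ fire at time $\ell$ in $\mathcal{H}$. Both executions are deterministic once $B$ and the failure set $F$ are fixed, so the two ``unique executions'' named in the definition are well defined and the induction runs over a single pair of runs.

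For the base case $\ell = 0$: by Definition~\ref{def: presented1}, $rep(c)$ fires at time $0$ in $\mathcal{A}_1$ exactly when $c \in B$. If $c \in B$, then by Definition~\ref{def: presented2} every surviving rep of $c$ fires at time $0$ in $\mathcal{H}$, and the failure constraint on $F$ guarantees at least $m(1-\epsilon)$ surviving reps of $c$. This yields the required $m(1-\epsilon)$ firing reps at time $0$.

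For the inductive step, fix a concept $c$ at level $\ell \geq 1$ with $rep(c)$ firing at time $\ell$ in $\mathcal{A}_1$. Since the only weight-$1$ edges into $rep(c)$ come from reps of children of $c$, the potential of $rep(c)$ at time $\ell$ equals the number of children $c'$ whose $rep(c')$ fires at time $\ell-1$; firing in $\mathcal{A}_1$ forces this count to be at least the threshold $r_2 k$. Let $S$ be this set of children, so $|S| \geq r_2 k$. I would apply the inductive hypothesis to each $c' \in S$ to obtain at least $m(1-\epsilon)$ firing reps of $c'$ at time $\ell-1$ in $\mathcal{H}$. Now take any non-failed $v \in reps(c)$: its potential at time $\ell$ counts all firing reps of children of $c$ at time $\ell-1$, hence is at least $\sum_{c' \in S} m(1-\epsilon) = |S|\,m(1-\epsilon) \geq r_2 k \cdot m(1-\epsilon)$, which is exactly the threshold $\tau = r_2 k m(1-\epsilon)$ of $\mathcal{H}$. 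Thus every non-failed rep of $c$ fires at time $\ell$, and the failure constraint again supplies at least $m(1-\epsilon)$ such reps, completing the step.

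The calculation itself is routine; the crux is recognizing that the threshold $\tau = r_2 k m(1-\epsilon)$ of $\mathcal{H}$ is precisely the product of two quantities that the two sides of the induction supply: the $r_2 k$ lower bound on the number of firing children inherited from the firing of $rep(c)$ in $\mathcal{A}_1$, and the $m(1-\epsilon)$ lower bound on the reps of each such child that fire in $\mathcal{H}$. The survival bound $m(1-\epsilon)$ does double duty here — it ensures enough reps of each child fire to drive the potential of $v$ over $\tau$, and it ensures enough reps of $c$ itself survive to meet the conclusion. I expect the only points requiring care are keeping the integer counts straight against the real-valued thresholds (each count is an integer bounded below by a real, which is all that is needed) and noting that extra firing reps from children outside $S$ can only increase the potential, so the lower bound is unaffected. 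Notably, the gap assumption $r_1 \leq r_2(1-\epsilon)$ plays no role in this direction; it is reserved for the non-firing guarantee, i.e., the $\leq_{impl2}$ relationship to $\mathcal{A}_2$.
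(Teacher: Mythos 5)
Your proof is correct and follows essentially the same route as the paper's: induction on the level $t$, with the base case driven by the presentation definitions plus the survival constraint, and the inductive step combining the $r_2 k$ firing children from $\mathcal{A}_1$ with the $m(1-\epsilon)$ firing reps per child to meet the threshold $\tau = r_2 k m(1-\epsilon)$ at every surviving $v \in reps(c)$. Your closing observations (monotonicity of the potential bound and the fact that $r_1 \leq r_2(1-\epsilon)$ is not needed here) are accurate refinements the paper leaves implicit.
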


In other words, for every $B \subseteq C_0$, for the two unique executions of the two networks $\mathcal{A}_1$ and $\mathcal H$ on $B$, the following holds: 
For every concept $c$, if $rep(c)$ fires at time $level(c)$ in $\mathcal{A}_1$, then in $\mathcal{H}$, at least $m (1-\epsilon)$ of the neurons in $reps(c)$ fire at time $level(c)$.

\begin{proof}
Fix $B \subseteq C_0$, and consider the unique executions of $\mathcal{A}_1$ and $\mathcal H$ on input $B$.
We prove the following statement $P(t)$, for every $t \geq 0$.  This immediately implies the result.

$P(t)$:  For every concept $c$ with $level(c) = t$, if $rep(c)$ fires at time $t$ in $\mathcal{A}_1$, then in $\mathcal H$, at least $m (1-\epsilon)$ of the neurons in $reps(c)$ fire at time $t$.

We prove $P(t)$ by induction on $t$, for all concepts $c$.

\emph{Base:}  $t = 0$. 
Suppose that $c$ is a concept with $level(c) = 0$ and $rep(c)$ fires at time $0$ in network $\mathcal{A}_1$.
Then by definition of presentation in $\mathcal{A}_1$, $c \in B$.
Then by definition of presentation in $\mathcal{H}$ and the assumption on $\mathcal H$ that at least $m (1 - \epsilon)$ of the neurons in $reps(c)$ survive, at least $m (1 - \epsilon)$ of the neurons in $reps(c)$ fire at time $0$ in $\mathcal H$, as needed.

\emph{Inductive step:} $t \geq 1$:
Assume that $P(t-1)$ holds.
Consider $c$ with $level(c) = t$.
Suppose that $rep(c)$ fires at time $t$ in $\mathcal{A}_1$.
Then $rep(c)$ must have enough incoming potential from the $reps$ of its children to fire.
Since the threshold in $\mathcal{A}_1$ is $r_2 k$, this means that the $reps$ of at least $r_2 k$ of $c$'s children fire at time $t-1$.

Then by the inductive hypothesis $P(t-1)$, we have that, in $\mathcal{H}$, for each of these $r_2 k$ children $c'$ of $c$, at least $m (1 - \epsilon)$ of the neurons in $reps(c')$ fire at time $t-1$.  This yields a total incoming potential to each neuron $v \in reps(c)$ of at least $r_2 k m (1-\epsilon)$. 
This is enough to meet the firing threshold for $v$.
By assumption on $\mathcal H$, at least $m (1 - \epsilon)$ of the neurons $v \in reps(c)$ survive.  Therefore, since their thresholds are met, they fire.
Thus, at least $m (1 - \epsilon)$ of the neurons in $reps(c)$ fire at time $t$ in $\mathcal H$, as needed.
\end{proof}

Now we can show that $\mathcal H$ satisfies the firing requirement.

\begin{theorem}
\label{thm: H-firing-guarantee-2}
$\mathcal{H}$ satisfies Part 1 of the definition of the recognition problem for networks with multi-neuron representations, Definition~\ref{def: recog-ff-multi}.
\end{theorem}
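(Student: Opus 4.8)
The plan is to obtain this statement immediately by composing two results already established earlier in the paper, using the general mapping theorem as the glue. Specifically, I would invoke Theorem~\ref{thm: general-mapping-theorem1}, whose conclusion is exactly ``$\mathcal{D}$ satisfies Part 1 of Definition~\ref{def: recog-ff-multi}'', instantiated with $\mathcal{D} = \mathcal{H}$ and $\mathcal{A} = \mathcal{A}_1$. That theorem has two hypotheses, and the entire task reduces to verifying that both are already in hand.

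The first hypothesis of Theorem~\ref{thm: general-mapping-theorem1} is that $\mathcal{A}_1$ guarantees $r_2$-firing for $\mathcal{C}$ (in the sense of Definition~\ref{def: firing-guarantee}); this is precisely the content of Theorem~\ref{thm: A1-firing-guarantee}, which I may assume. The second hypothesis is that $\mathcal{H} \leq_{impl1} \mathcal{A}_1$; this is precisely Theorem~\ref{thm: H-firing-guarantee}, whose inductive proof establishing the firing correspondence between the two networks has just been completed. With both hypotheses discharged, the conclusion follows directly.

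Concretely, I would write a one-line proof: by Theorem~\ref{thm: A1-firing-guarantee}, $\mathcal{A}_1$ guarantees $r_2$-firing for $\mathcal{C}$, and by Theorem~\ref{thm: H-firing-guarantee}, $\mathcal{H} \leq_{impl1} \mathcal{A}_1$; applying Theorem~\ref{thm: general-mapping-theorem1} with these two facts yields that $\mathcal{H}$ satisfies Part 1 of Definition~\ref{def: recog-ff-multi}. I do not anticipate any genuine obstacle here: the real work—carrying the firing guarantee up through the levels while accounting for the surviving $reps$ and the scaled threshold $r_2 k m(1-\epsilon)$—was discharged inside Theorem~\ref{thm: H-firing-guarantee}, so the present statement is a pure composition step whose only purpose is to package that implementation result together with the abstract correctness result into the desired multi-neuron firing requirement. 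The one thing I would be careful about is citing the correct parts and definitions (firing rather than non-firing, $\leq_{impl1}$ rather than $\leq_{impl2}$, Part 1 rather than Part 2), since the symmetric non-firing chain through $\mathcal{A}_2$, $\leq_{impl2}$, and Theorem~\ref{thm: general-mapping-theorem2} is deliberately kept separate.
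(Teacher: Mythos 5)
Your proposal is correct and matches the paper's own proof exactly: both cite Theorem~\ref{thm: A1-firing-guarantee} for the $r_2$-firing guarantee of $\mathcal{A}_1$, Theorem~\ref{thm: H-firing-guarantee} for $\mathcal{H} \leq_{impl1} \mathcal{A}_1$, and then apply Theorem~\ref{thm: general-mapping-theorem1} to conclude. Nothing further is needed.
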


\begin{proof}
Theorem~\ref{thm: A1-firing-guarantee} says that $\mathcal{A}_1$ guarantees $r_2$-firing for $\mathcal{C}$.  
Theorem~\ref{thm: H-firing-guarantee} says that $\mathcal{H} \leq_{impl1} \mathcal{A}_1$.
Then Theorem~\ref{thm: general-mapping-theorem1} implies that $\mathcal{H}$ satisfies Part 1 of Definition~\ref{def: recog-ff-multi}.
\end{proof}

\subsection{$\mathcal{H}$ implements $\mathcal{A}_2$}

Now we establish a second implementation relationship, between the detailed network $\mathcal H$ and the abstract network $\mathcal{A}_2$, in order to carry over non-firing guarantees from $\mathcal{A}_2$ to $\mathcal{H}$, in the manner described by Theorem~\ref{thm: general-mapping-theorem2}.

Again we assume a fixed set $F$ of failed nodes, subject to the constraint that the number of surviving neurons among the $reps$ of any $c$ is at least $m (1 - \epsilon)$.  Now we show:

\begin{theorem}
\label{thm: H-non-firing-guarantee}
$\mathcal{H} \leq_{impl2} \mathcal{A}_2$.
\end{theorem}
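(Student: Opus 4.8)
The plan is to derive the statement directly from the definition of $\leq_{impl2}$ (Definition~\ref{def: impl-non-firing}), by a non-firing analogue of the inductive argument used for Theorem~\ref{thm: H-firing-guarantee}. Fix $B \subseteq C_0$ and consider the two unique executions of $\mathcal{A}_2$ and $\mathcal{H}$ on $B$. I would prove by induction on $t \geq 0$ the statement $Q(t)$: for every concept $c$ with $level(c) = t$, if $rep(c)$ does not fire at time $t$ in $\mathcal{A}_2$, then in $\mathcal{H}$ no neuron $v \in reps(c)$ fires at time $t$. Quantifying over all $t$ yields exactly the conclusion required by $\leq_{impl2}$.

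For the base case $t = 0$: if $rep(c)$ does not fire at time $0$ in $\mathcal{A}_2$, then by the single-neuron presentation convention (Definition~\ref{def: presented1}) we must have $c \notin B$. Since the $reps$ sets of distinct level $0$ concepts are disjoint, $reps(c) \cap reps(B) = \emptyset$, so every $u \in reps(c)$ satisfies $u \notin reps(B)$; by the multi-neuron presentation convention (Definition~\ref{def: presented2}), no such $u$ fires at time $0$ in $\mathcal{H}$, establishing $Q(0)$.

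For the inductive step, assume $Q(t-1)$ and take $c$ with $level(c) = t$ such that $rep(c)$ does not fire at time $t$ in $\mathcal{A}_2$. Since $\mathcal{A}_2$ has threshold $r_1 k$ and all child edges have weight $1$, this means strictly fewer than $r_1 k$ of $c$'s children have their $rep$ firing at time $t - 1$ in $\mathcal{A}_2$. By $Q(t-1)$, for every other child $c'$ of $c$ no neuron of $reps(c')$ fires at time $t-1$ in $\mathcal{H}$. Because only $reps$ of children of $c$ carry weight-$1$ edges into a neuron $v \in reps(c)$, the only neurons that can contribute potential to $v$ at time $t$ lie in the $reps$ of the strictly-fewer-than-$r_1 k$ firing children, and each such child has only $m$ $reps$; hence the incoming potential to any $v \in reps(c)$ is strictly less than $r_1 k \, m$.

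It remains to compare this with $\mathcal{H}$'s threshold $\tau = r_2 k m (1 - \epsilon)$, and here the gap assumption $r_1 \leq r_2 (1 - \epsilon)$ does the work: it gives $r_1 k m \leq r_2 k m (1 - \epsilon) = \tau$, so the potential at $v$ is strictly below $\tau$ and $v$ does not fire, establishing $Q(t)$. I expect the only delicate point to be this arithmetic in the inductive step; unlike the firing direction, the non-firing direction requires bounding the \emph{maximum} possible potential, so I must charge each firing child its full complement of $m$ $reps$. Note that the survival constraint on $F$ plays no role here, since neuron failures can only decrease the potential and hence only reinforce the non-firing conclusion.
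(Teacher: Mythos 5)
Your proposal is correct and follows essentially the same argument as the paper's proof: the same induction on levels, the same base case via the presentation conventions, the same counting bound of strictly less than $r_1 k m$ on the potential, and the same use of the gap assumption $r_1 \leq r_2(1-\epsilon)$ to conclude non-firing. Your added observations (charging each firing child its full $m$ reps, and the irrelevance of the failure set $F$ for the non-firing direction) are accurate refinements of points the paper leaves implicit.
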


In other words, for every $B \subseteq C_0$, for the two unique executions of the two networks $\mathcal{A}_2$ and $\mathcal H$ on $B$, the following holds:  For every concept $c$, if $rep(c)$ does not fire at time $level(c)$ in $\mathcal{A}_2$, then in $\mathcal H$, no neurons in $reps(c)$ fire at time $level(c)$.  

\begin{proof}
Fix $B \subseteq C_0$, and consider the unique executions of $\mathcal{A}_2$ and $\mathcal H$ on input $B$.
We prove the following statement $P(t)$, for every $t \geq 0$, which immediately implies the result.

$P(t)$:
For every concept $c$ with $level(c) = t$, if $rep(c)$ does not fire at time $t$ in $\mathcal{A}_2$ then in $\mathcal H$, no neurons in $reps(c)$ fire at time $t$.

We prove $P(t)$ by induction on $t$, for all concepts $c$.

\emph{Base:}  $t = 0$: 
Suppose that $c$ is a concept with $level(c) = 0$  and $rep(c)$ does not fire at time $0$ in network $\mathcal{A}_2$.
Then by definition of presentation in $\mathcal{A}_2$, $c \notin B$. 
Then by definition of presentation in $\mathcal H$, none of the neurons in $reps(c)$ fire at time $0$, as needed.

\emph{Inductive step:}  $t \geq 1$: 
Assume that $P(t-1)$ holds, and consider $c$ with $level(c) = t$.
Suppose that $rep(c)$ does not fire at time $t$ in $\mathcal{A}_2$.
Then $rep(c)$ must not have enough incoming potential from its children to fire.
Since the threshold in $\mathcal{A}_2$ is $r_1 k$, this means that the $reps$ of strictly fewer than $r_1 k$ of $c$'s children fire at time $t-1$.

Then by inductive hypothesis $P(t-1)$, we have that, in $\mathcal H$, for each child $c'$ of $c$ that does not fire, none of the neurons in $reps(c')$ fire at time $t-1$.
This yields a total incoming potential to each neuron $v \in reps(c)$ of strictly less than $r_1 k m$.
But the firing threshold for each such $v$ is $r_2 k m (1-\epsilon)$.
Since we have assumed that $r_1 \leq r_2 (1 - \epsilon)$,
we know that $r_1 k m \leq r_2 k m (1 - \epsilon)$.
So the incoming potential to each such $v$ is strictly less than its firing threshold $r_2 k m (1-\epsilon)$, which implies that $v$ does not fire at time $t$, as needed.
\end{proof}

Now we can show that $\mathcal H$ satisfies the non-firing requirement.

\begin{theorem}
\label{thm: H-non-firing-guarantee-2}
$\mathcal{H}$ satisfies Part 2 of the definition of the recognition problem for networks with multi-neuron representations, Definition~\ref{def: recog-ff-multi}.
\end{theorem}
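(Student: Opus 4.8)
The plan is to prove this statement exactly as its firing counterpart, Theorem~\ref{thm: H-firing-guarantee-2}, was proved: by composing three results that are already established, namely the abstract-level non-firing guarantee, the second implementation relationship, and the general mapping theorem for non-firing. The entire argument is a one-line chaining of prior theorems, with no new technical content.

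Concretely, I would proceed in three steps. First, invoke Theorem~\ref{thm: A2-non-firing-guarantee}, which asserts that the abstract network $\mathcal{A}_2$ guarantees $r_1$-non-firing for $\mathcal C$ in the sense of Definition~\ref{def: non-firing-guarantee}; this is the abstract-level correctness fact I want to transport down to the detailed network. Second, invoke Theorem~\ref{thm: H-non-firing-guarantee}, namely $\mathcal{H} \leq_{impl2} \mathcal{A}_2$, which says precisely that whenever $rep(c)$ fails to fire at time $level(c)$ in $\mathcal{A}_2$, no neuron in $reps(c)$ fires at time $level(c)$ in $\mathcal{H}$, for every input $B \subseteq C_0$. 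Third, apply Theorem~\ref{thm: general-mapping-theorem2} with $\mathcal{A} = \mathcal{A}_2$ and $\mathcal{D} = \mathcal{H}$: since $\mathcal{A}_2$ guarantees $r_1$-non-firing and $\mathcal{H} \leq_{impl2} \mathcal{A}_2$, that theorem directly yields that $\mathcal{H}$ satisfies Part 2 of Definition~\ref{def: recog-ff-multi}, which is exactly the claim.

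There is essentially no obstacle at this stage, because all of the real work has already been done earlier. The substantive argument for the non-firing direction lives inside Theorem~\ref{thm: H-non-firing-guarantee}, where the gap assumption $r_1 \leq r_2 (1 - \epsilon)$ is used to show that the incoming potential to each $v \in reps(c)$ stays strictly below the threshold $r_2 k m (1 - \epsilon)$; and the translation between $c \notin supp_{r_1}(B)$ and the non-firing of $rep(c)$ in $\mathcal{A}_2$ is handled inside Theorem~\ref{thm: general-mapping-theorem2}. Thus the present theorem is purely a packaging step, mirroring Theorem~\ref{thm: H-firing-guarantee-2}, and I would keep its proof correspondingly short.
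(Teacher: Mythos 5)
Your proposal matches the paper's own proof exactly: it cites Theorem~\ref{thm: A2-non-firing-guarantee}, Theorem~\ref{thm: H-non-firing-guarantee}, and then applies Theorem~\ref{thm: general-mapping-theorem2} with $\mathcal{A} = \mathcal{A}_2$ and $\mathcal{D} = \mathcal{H}$, which is the paper's three-line composition verbatim. Your assessment that the result is purely a packaging step, with all substantive work done in the earlier theorems, is also correct.
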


\begin{proof}
Theorem~\ref{thm: A2-non-firing-guarantee} says that $\mathcal{A}_2$ guarantees $r_1$-non-firing for $\mathcal{C}$.  
Theorem~\ref{thm: H-non-firing-guarantee} says that $\mathcal{H} \leq_{impl2} \mathcal{A}_2$.
Then Theorem~\ref{thm: general-mapping-theorem2} implies that $\mathcal{H}$ satisfies Part 2 of Definition~\ref{def: recog-ff-multi}.
\end{proof}

\subsection{Correctness of $\mathcal H$}

\begin{theorem}
$\mathcal{H}$ $(r_1,r_2)$-recognizes $\mathcal C$.
\end{theorem}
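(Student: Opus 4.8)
The plan is to observe that Definition~\ref{def: recog-ff-multi}, the notion of $(r_1,r_2)$-recognition for multi-neuron representations, consists of exactly two clauses: a firing requirement (Part 1) and a non-firing requirement (Part 2). Each of these has already been established for $\mathcal{H}$ as a standalone statement, so the entire argument amounts to assembling the two halves.

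Concretely, I would first invoke Theorem~\ref{thm: H-firing-guarantee-2}, which states that $\mathcal{H}$ satisfies Part 1 of Definition~\ref{def: recog-ff-multi}: whenever $c \in supp_{r_2}(B)$, at least $m(1-\epsilon)$ of the neurons in $reps(c)$ fire at time $level(c)$. Then I would invoke Theorem~\ref{thm: H-non-firing-guarantee-2}, which states that $\mathcal{H}$ satisfies Part 2: whenever $c \notin supp_{r_1}(B)$, no neuron in $reps(c)$ fires at time $level(c)$. Since both statements hold for every presented set $B \subseteq C_0$, Definition~\ref{def: recog-ff-multi} is satisfied in full, which is precisely the claim that $\mathcal{H}$ $(r_1,r_2)$-recognizes $\mathcal{C}$.

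Equivalently, one could bypass the two intermediate ``satisfies Part $i$'' theorems and appeal directly to the combined mapping theorem, Theorem~\ref{thm:  general-mapping-theorem3}. Its four hypotheses are supplied by Theorem~\ref{thm: A1-firing-guarantee} ($\mathcal{A}_1$ guarantees $r_2$-firing for $\mathcal{C}$), Theorem~\ref{thm: A2-non-firing-guarantee} ($\mathcal{A}_2$ guarantees $r_1$-non-firing for $\mathcal{C}$), Theorem~\ref{thm: H-firing-guarantee} ($\mathcal{H} \leq_{impl1} \mathcal{A}_1$), and Theorem~\ref{thm: H-non-firing-guarantee} ($\mathcal{H} \leq_{impl2} \mathcal{A}_2$). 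These are exactly the premises the combined theorem needs in order to yield $(r_1,r_2)$-recognition, so either route closes the proof in one step.

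I do not expect any real obstacle here; all the work has been front-loaded into the implementation relationships and the correctness of the abstract networks. The only point to check is that the quantifier structure lines up: Definition~\ref{def: recog-ff-multi} fixes a single presented set $B$ and asks for both conclusions, whereas the two part-theorems are each stated for arbitrary $B$, so the universal quantification over $B$ matches automatically. I would also note, for completeness, that the gap assumption $r_1 \leq r_2(1-\epsilon)$ imposed on $\mathcal{H}$ has already been consumed inside the proof of Theorem~\ref{thm: H-non-firing-guarantee}, so nothing further about it needs to be re-verified at this final stage.
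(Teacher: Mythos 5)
Your proposal is correct and matches the paper's own proof, which likewise concludes ``By Theorems~\ref{thm: H-firing-guarantee-2} and~\ref{thm: H-non-firing-guarantee-2}'' --- i.e., exactly your first route of assembling the two part-theorems into the two clauses of Definition~\ref{def: recog-ff-multi}. Your alternative route via the combined mapping theorem is the same logic with the intermediate theorems inlined, so there is no substantive difference.
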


\begin{proof}
By Theorems~\ref{thm: H-firing-guarantee-2} and~\ref{thm: H-non-firing-guarantee-2}.
\end{proof}

Thus, we have shown correctness of the deterministic network $\mathcal{H}$.  The network of~\cite{lynch2024multineuron} is probabilistic, in that it chooses the failed set $F$ randomly.
It is shown in~\cite{lynch2024multineuron} that, with high probability, that network achieves the restrictions on failures that are assumed here.
Therefore, we can infer a claim about the probabilistic version of the network:  that with high probability it $(r_1,r_2)$-recognizes $\mathcal C$.\footnote{Again, we remark that the probabilities here are slightly larger than in~\cite{lynch2024multineuron}, because the "shapes" of the hierarchies are slightly different.} 

\section{Detailed Network for a Network Model with Low Connectivity}
\label{sec: detailed-algorithm-low}

Now we add the complication of limited connectivity, in addition to neuron failures.
We describe our second detailed network, $\mathcal{L}$.
$\mathcal{L}$ is also based on a feed-forward network, but now with partial connectivity from each layer $\ell - 1$ to layer $\ell$.
In $\mathcal L$, as in $\mathcal H$, $\ell'_{max} = \ell_{max}$.  
Also, every concept $c \in C$ has exactly $m$ representing neurons, $reps(c)$, as described in Section~\ref{sec: recog-multi}, and every neuron $v \in reps(c)$ has $layer(v) = level(c)$.
Network $\mathcal L$ includes a set $F$ of failed neurons, subject to constraints that we define below.

Network $\mathcal{L}$ is a deterministic version of the second probabilistic network in~\cite{lynch2024multineuron}.
In~\cite{lynch2024multineuron}, we gave a direct proof that the network of that paper works correctly, that is, that it $(r_1,r_2)$-recognizes $\mathcal C$, using a probabilistic version of the definition.
In this paper, we give a new proof based on implementation relationships $\leq_{impl1}$ and $\leq_{impl2}$ from $\mathcal L$ to our abstract networks $\mathcal{A}_1$ and $\mathcal{A}_2$, respectively.

Again, we begin by removing all of the probabilistic choices from the network, in order to produce a deterministic network.
Again, the only probability in the network of~\cite{lynch2024multineuron} is in the choice of the set $F$ of failed neurons.  The connectivity of the algorithm in~\cite{lynch2024multineuron} is not determined probabilistically, but rather, nondeterministically subject to some constraints.  
Here we assume that some arbitrary set $F$ of neurons fails, subject to the same constraint as for $\mathcal H$, that is, that the number of surviving neurons among the $reps$ of any $c$ is at least $m (1-\epsilon)$.
We also add a new constraint: that for any concept $c$ with $level(c) \geq 1$, any $v \in reps(c)$, and any child $c'$ of $c$, there are at least $a m (1 - \epsilon)$ neurons in $reps(c')$ that both survive and are connected with weight $1$ edges to $v$.
Two of the "survival lemmas" in~\cite{lynch2024multineuron}, Lemmas 6.2 and 6.4, say that these bounds are achieved in the network of~\cite{lynch2024multineuron} with high probability.

Thus, we again rely on survival lemmas from~\cite{lynch2024multineuron} to argue that the probabilistic version of the network satisfies our desired constraints.  Here, we just assume the constraints and forget about the probability. 
Once we fix $F$ and the edge weights, the behavior of the network is completely determined by the input set $B$.


\subsection{The network $\mathcal{L}$}
\label{sec: L-def}

Now we use the following assumptions:
\begin{itemize}
\item 
$\tau = a r_2 k m (1-\epsilon)$.\footnote{As before, to make our results a bit more general, we could use an arbitrary threshold in the range ${a r_1 k m (1 - \epsilon), a r_2 k m (1 - \epsilon)]}$ instead of just the higher threshold $a r_2 k m (1 - \epsilon)$.}
\item 
$r_1 \leq a r_2 (1 - \epsilon)$.
\item 
$F$ is the set of failed neurons.  We assume that the number of surviving neurons among the $reps$ of any $c$ is at least $m (1 - \epsilon)$. 
\item 
$E$ is the set of $weight$ $1$ edges from $reps$ of children to $reps$ of their parents.  These represent the (partial) connectivity of the network.
We assume that, for any concept $c$ with $level(c) \geq 1$, any $v \in reps(c)$, and any child $c'$ of $c$, there are at least $a m (1 - \epsilon)$ neurons $u \in reps(c')$ such that $u$ survives and $(u,v) \in E$.
\end{itemize}
Thus, we assume a fairly low firing threshold and limited failures.
Now we assume a larger gap between $r_1$ and $r_2$ than in $\mathcal H$, namely, $r_1 \leq a r_2 (1 - \epsilon)$, in order to accommodate both neuron failures and limited connectivity.
We assume weight $1$ edges between "sufficiently many" surviving $reps$ of children and $reps$ of their parents.

The network operates as in~\cite{lynch2024multineuron}.
That is, we start by presenting a set $B \subseteq C_0$ at time $0$.
Then the neuron firing propagates up the layers, according to the deterministic firing rule in Section~\ref{sec: network-operation}.

The global states of $\mathcal L$ consist of the following components, which are the same as for $\mathcal H$.  The only difference is in the weights of edges, reflecting partial connectivity.
\begin{itemize}
    \item
    For each neuron $u$ in the network, $failed^u \in \{0,1\}$. 
    The initial values are set according to $F$:  $failed^u = 1$ if and only if $u \in F$.
    \item 
    For each neuron $u$ in the network, $firing^u \in \{ 0, 1 \}$.  Initially, the layer $0$ neurons’ $firing$ components are set according to the definition of presenting $B$ for multi-neuron representations. 
    That is, if $u \in reps(c) - F$ then $u$ fires at time $0$, whereas if $u \notin reps(c)$ or $u \in F$, then $u$ does not fire at time $0$.
    All the higher-layer neurons have their initial $firing$ components set to $0$.
    \item 
    For each neuron $v$ with $layer(v) = \ell \geq 1$, a mapping $weight^v$ from neurons at layer $\ell-1$ to $\{0,1\}$.  For each neuron $v \in reps(c)$, we have weight $1$ edges to $v$ from all $u \in reps(children(c))$ such that $(u,v) \in E$.  All other edges have weight $0$.  
\end{itemize}
The transitions are as for $\mathcal H$.
Now only the surviving neurons that are connected to a neuron $v$ via edges in $E$ contribute to the incoming potential to $v$.

\subsection{$\mathcal{L}$ implements $\mathcal{A}_1$}

Now we establish an implementation relationship between the detailed network $\mathcal L$ and the abstract network $\mathcal{A}_1$.
We use this to carry over firing guarantees from $\mathcal{A}_1$ to $\mathcal L$, as described by Theorem~\ref{thm: general-mapping-theorem1}.

We assume a fixed set $F$ of failed nodes in $\mathcal L$, and a fixed set $E$ of weight $1$ edges, subject to the constraints given in Section~\ref{sec: L-def}.  We show:

\begin{theorem}
\label{thm: L-firing-guarantee}
$\mathcal{L} \leq_{impl1} \mathcal{A}_1$.
\end{theorem}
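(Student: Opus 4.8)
The plan is to mirror exactly the structure of the proof of Theorem~\ref{thm: H-firing-guarantee}, adapting only the potential computation to account for the partial connectivity encoded in the edge set $E$. By Definition~\ref{def: impl-firing}, it suffices to fix an arbitrary $B \subseteq C_0$, consider the two unique executions of $\mathcal{A}_1$ and $\mathcal{L}$ on $B$, and prove the statement $P(t)$: for every concept $c$ with $level(c) = t$, if $rep(c)$ fires at time $t$ in $\mathcal{A}_1$, then in $\mathcal{L}$ at least $m(1-\epsilon)$ of the neurons in $reps(c)$ fire at time $t$. I would prove $P(t)$ by induction on $t$.

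For the base case $t=0$, the argument is identical to the one in Theorem~\ref{thm: H-firing-guarantee}: if $rep(c)$ fires at time $0$ in $\mathcal{A}_1$ then $c \in B$, and by the presentation convention for multi-neuron representations together with the survival constraint (at least $m(1-\epsilon)$ of the $reps(c)$ survive), at least $m(1-\epsilon)$ neurons in $reps(c)$ fire at time $0$ in $\mathcal{L}$. For the inductive step $t \geq 1$, I would assume $P(t-1)$ and take $c$ with $level(c)=t$ such that $rep(c)$ fires at time $t$ in $\mathcal{A}_1$. Since the threshold in $\mathcal{A}_1$ is $r_2 k$, at least $r_2 k$ children of $c$ have their $reps$ firing at time $t-1$ in $\mathcal{A}_1$; by $P(t-1)$, for each such child $c'$ at least $m(1-\epsilon)$ neurons in $reps(c')$ fire at time $t-1$ in $\mathcal{L}$.

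The key difference from the $\mathcal H$ proof — and the step I expect to be the main obstacle — is the potential bound, where the limited connectivity forces the extra factor $a$. Here I cannot simply say that all $m(1-\epsilon)$ firing neurons of each child contribute to a given $v \in reps(c)$; instead I use the connectivity constraint from Section~\ref{sec: L-def}, which guarantees that for each child $c'$ and each $v \in reps(c)$ there are at least $a m (1-\epsilon)$ surviving neurons $u \in reps(c')$ with $(u,v) \in E$. Each of these $u$ fires at time $t-1$ by the inductive hypothesis (since all surviving members of $reps(c')$ that we counted are firing), so each contributes weight $1$ to the potential of $v$. Summing over the at least $r_2 k$ supported children yields an incoming potential to each $v \in reps(c)$ of at least $a r_2 k m (1-\epsilon)$, which exactly meets the firing threshold $\tau = a r_2 k m (1-\epsilon)$ for $\mathcal L$.

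Finally, by the survival constraint at least $m(1-\epsilon)$ of the neurons $v \in reps(c)$ do not fail, and since each such surviving $v$ has its threshold met, each fires at time $t$. Hence at least $m(1-\epsilon)$ neurons in $reps(c)$ fire at time $t$ in $\mathcal{L}$, establishing $P(t)$ and completing the induction; this proves $\mathcal{L} \leq_{impl1} \mathcal{A}_1$. The one subtlety to state carefully is that the connectivity lower bound is promised \emph{per pair} $(c',v)$, so the count of $a m (1-\epsilon)$ incoming firing edges is available uniformly for every target $v \in reps(c)$, which is what lets the potential bound hold simultaneously for all the surviving $reps(c)$ neurons.
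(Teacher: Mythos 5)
Your overall skeleton (fix $B$, induct on levels, use the per-pair connectivity constraint, conclude via the survival constraint) matches the paper's, but there is a genuine gap in the inductive step: your induction predicate $P(t)$ is too weak to support it. $P(t-1)$ only tells you that \emph{some} set of at least $m(1-\epsilon)$ neurons in $reps(c')$ fires at time $t-1$; it does not tell you \emph{which} ones. Your parenthetical justification ``since all surviving members of $reps(c')$ that we counted are firing'' is exactly the assertion you need, but it is not implied by $P(t-1)$: consistently with $P(t-1)$, some surviving neurons of $reps(c')$ may fail to fire, and these non-firing survivors may sit precisely among the $a m(1-\epsilon)$ survivors connected to your target $v$. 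Quantitatively, from $P(t-1)$ alone, the intersection of the firing set (size $\geq m(1-\epsilon)$) with the surviving-and-connected-to-$v$ set (size $\geq a m(1-\epsilon)$), both inside a set of size $m$, can only be bounded below by $m\bigl(a(1-\epsilon)-\epsilon\bigr)$; summing over the $r_2 k$ children gives a potential of at least $r_2 k m \bigl(a(1-\epsilon)-\epsilon\bigr)$, which falls short of the threshold $\tau = a r_2 k m(1-\epsilon)$ by $r_2 k m \epsilon$. So the step genuinely fails whenever $\epsilon > 0$; this is exactly the complication the paper flags as the reason $\mathcal L$ needs a different treatment from $\mathcal H$.

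The fix is the one the paper adopts, and your own argument already secretly proves it: strengthen the induction predicate to $Q(t)$: for every concept $c$ with $level(c)=t$, if $rep(c)$ fires at time $t$ in $\mathcal{A}_1$, then \emph{every surviving} neuron in $reps(c)$ fires at time $t$ in $\mathcal{L}$. Note that your inductive step in fact concludes that every surviving $v \in reps(c)$ meets its threshold and fires --- which is $Q(t)$ --- before you weaken it to ``at least $m(1-\epsilon)$ fire.'' Keep the strong form as the hypothesis: then for each firing child $c'$, the $a m(1-\epsilon)$ surviving neurons of $reps(c')$ connected to $v$ all fire by $Q(t-1)$ (because they survive), the potential bound $a r_2 k m(1-\epsilon)$ follows, and $Q(t)$ yields the $\leq_{impl1}$ requirement because at least $m(1-\epsilon)$ of the neurons in $reps(c)$ survive. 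With this one change your proof coincides with the paper's.
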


In other words, for every $B \subseteq C_0$, for the two unique xecutions of the two networks $\mathcal{A}_1$ and $\mathcal L$ on $B$, the following holds:
For every concept $c$, if $rep(c)$ fires at time $level(c)$ in $\mathcal{A}_1$, then in $\mathcal L$, at least $m (1 - \epsilon)$ of the neurons in $reps(c)$ fire at time $level(c)$.

In order to prove the corresponding theorem for $\mathcal H$, Theorem~\ref{thm: H-firing-guarantee}, we used induction on $t$ to prove the following statement $P(t)$:
For every concept $c$ with $level(c) = t$, if $rep(c)$ fires at time $t$ in $\mathcal{A}_1$, then in $\mathcal H$, at least $m (1 - \epsilon)$ of the neurons in $reps(c)$ fire at time $t$.
However, for $\mathcal L$, the combination of partial connectivity and failures adds new complications, which require us to state a stronger predicate to be proved by induction.

\begin{proof}
Fix $B \subseteq C_0$, and consider the unique executions of $\mathcal{A}_1$ and $\mathcal L$ on input $B$.
We prove the following statement $Q(t)$, for every $t \geq 0$:

$Q(t)$:  For every concept $c$ with $level(c) = t$, if $rep(c)$ fires at time $t$ in $\mathcal{A}_1$, then in $\mathcal L$, every surviving neuron in $reps(c)$ fires at time $t$. 

By our assumptions on $F$, we know that, for any $c$, the number of surviving neurons in $reps(c)$ is at least $m (1 - \epsilon)$.
Therefore, $Q(t)$ implies:  
For every concept $c$ with $level(c) = t$, if $rep(c)$ fires at time $t$ in $\mathcal{A}_1$, then in $\mathcal L$,  at least $m (1 - \epsilon)$ of the neurons in $reps(c)$ fire at time $t$.
This is what is needed for the implementation relationship $\leq_{impl1}$.

We prove $Q(t)$ by induction on $t$, for all concepts $c$.

\emph{Base:} $t = 0$:
Suppose that $c$ is a concept with $level(c) = 0$ and $rep(c)$ fires at time $0$ in $\mathcal{A}_1$.
Then by definition of presentation in $A_1$, $c$ is in $B$.
Then by definition of presentation in $L$, all surviving neurons in $reps(c)$ fire, as needed.

\emph{Inductive step:} $t \geq 1$: 
Assume that $Q(t-1)$ holds, and consider $c$ with $level(c) = t$.
Suppose that $rep(c)$ fires at time $t$ in $\mathcal{A}_1$.
Then $rep(c)$ must have enough incoming potential from the $reps$ of its children to fire.
Since the threshold in $\mathcal{A}_1$ is $r_2 k$, it means that the $reps$ of at least $r_2 k$ of $c$'s children fire at time $t-1$.
Let $C'$ be the set of children of $c$ whose $reps$ fire at time $t-1$, in $\mathcal{A}_1$, so $|C'| \geq r_2 k$.
Then the inductive hypothesis $Q(t-1)$ implies that, in $\mathcal{L}$, for each $c' \in C'$, every surviving neuron in $reps(c')$ fires at time $t-1$.

Now consider any particular surviving neuron $v \in reps(c)$; we show that $v$ fires at time $t$.
Consider any particular $c' \in C'$. 
By an assumption on $\mathcal{L}$, we know that there are at least $a m (1-\epsilon)$ neurons in $reps(c')$ that both survive and are connected to $v$.
Since these neurons all survive, by the inductive hypothesis $Q(t-1)$, all of these neurons fire at time $t-1$.
So this entire collection of neurons in $reps(c')$ contributes at least $a m (1-\epsilon)$ potential to $v$.
Considering all the concepts $c' \in C'$ together, we get a total incoming potential to $v$ of at least $a r_2 k m (1 - \epsilon)$, which is enough to meet the firing threshold for $v$.
Therefore, since $v$ survives, it fires at time $t$, as needed.
\end{proof}

Now we can show that $\mathcal L$ satisfies the firing requirement. 

\begin{theorem}
\label{thm: L-firing-guarantee-2}
$\mathcal{L}$ satisfies Part 1 of the definition of the recognition problem for networks with multi-neuron representations, Definition~\ref{def: recog-ff-multi}.
\end{theorem}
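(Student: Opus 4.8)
The plan is to obtain this theorem as an immediate corollary of results already in hand, exactly paralleling the proof of the corresponding statement for $\mathcal{H}$ (Theorem~\ref{thm: H-firing-guarantee-2}). The firing half of the recognition requirement, Part 1 of Definition~\ref{def: recog-ff-multi}, has been abstracted through the implementation machinery of Section~\ref{sec: implementation}, so no fresh induction on the network dynamics is needed at this point; all of the dynamical work has already been carried out inside Theorem~\ref{thm: L-firing-guarantee}.

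First I would recall Theorem~\ref{thm: A1-firing-guarantee}, which establishes that the abstract network $\mathcal{A}_1$ guarantees $r_2$-firing for $\mathcal{C}$ in the sense of Definition~\ref{def: firing-guarantee}. This supplies the abstract-side hypothesis: whenever $B \subseteq C_0$ is presented and $c \in supp_{r_2}(B)$, the single representative $rep(c)$ fires at time $level(c)$ in $\mathcal{A}_1$. Next I would invoke the just-proved Theorem~\ref{thm: L-firing-guarantee}, giving $\mathcal{L} \leq_{impl1} \mathcal{A}_1$. Unpacking Definition~\ref{def: impl-firing}, this says that for the unique executions of the two networks on any common input $B$, firing of $rep(c)$ in $\mathcal{A}_1$ forces at least $m(1-\epsilon)$ of the neurons in $reps(c)$ to fire at time $level(c)$ in $\mathcal{L}$.

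Finally I would apply Theorem~\ref{thm: general-mapping-theorem1}, instantiated with the abstract network $\mathcal{A}_1$ and the detailed network $\mathcal{L}$, to conclude directly that $\mathcal{L}$ satisfies Part 1 of Definition~\ref{def: recog-ff-multi}. There is no genuine obstacle at this level: the theorem is a pure composition of the three cited results, and the whole point of the implementation framework is to make this step routine. The only nontrivial content lies upstream, in Theorem~\ref{thm: L-firing-guarantee}, where the strengthened inductive predicate $Q(t)$ (\emph{every} surviving $rep$ fires, rather than merely $m(1-\epsilon)$ of them) was required so that, under partial connectivity, each surviving $v \in reps(c)$ accumulates incoming potential at least $a r_2 k m (1-\epsilon)$ from its $r_2 k$ supported children and thereby meets the threshold $\tau = a r_2 k m (1-\epsilon)$. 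Once that predicate is available, the present statement follows mechanically from Theorems~\ref{thm: A1-firing-guarantee}, \ref{thm: L-firing-guarantee}, and~\ref{thm: general-mapping-theorem1}.
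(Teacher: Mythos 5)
Your proposal is correct and matches the paper's own proof exactly: it cites Theorem~\ref{thm: A1-firing-guarantee} for the $r_2$-firing guarantee of $\mathcal{A}_1$, Theorem~\ref{thm: L-firing-guarantee} for $\mathcal{L} \leq_{impl1} \mathcal{A}_1$, and composes them via Theorem~\ref{thm: general-mapping-theorem1}. Your added remarks about the strengthened predicate $Q(t)$ correctly identify where the real work lies, but they belong upstream and are not needed for this step.
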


\begin{proof}
Theorem~\ref{thm: A1-firing-guarantee} says that $\mathcal{A}_1$ guarantees $r_2$-firing for $\mathcal{C}$.  
Theorem~\ref{thm: L-firing-guarantee} says that $\mathcal{L} \leq_{impl1} \mathcal{A}_1$.
Then Theorem~\ref{thm: general-mapping-theorem1} implies that $\mathcal{L}$ satisfies Part 1 of Definition~\ref{def: recog-ff-multi}.
\end{proof}

\subsection{$\mathcal{L}$ implements $\mathcal{A}_2$}

Now we establish one more implementation relationship, between the detailed network $\mathcal{L}$ and the abstract network $\mathcal{A}_2$, in order to carry over the non-firing guarantees from $\mathcal{A}_2$ to $\mathcal L$.
We assume fixed sets $F$ of failed nodes and $E$ of weight $1$ edges, subject to the constraints given in Section~\ref{sec: L-def}.  We show:

\begin{theorem}
\label{thm: L-non-firing-guarantee}
$\mathcal{L} \leq_{impl2} \mathcal{A}_2$.
\end{theorem}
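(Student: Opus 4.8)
The plan is to mirror the argument for the high-connectivity case, Theorem~\ref{thm: H-non-firing-guarantee}, proving the statement by induction on the layer (equivalently, the level). I would fix $B \subseteq C_0$, consider the two unique executions of $\mathcal{A}_2$ and $\mathcal{L}$ on $B$, and prove for every $t \geq 0$ the predicate $P(t)$: for every concept $c$ with $level(c) = t$, if $rep(c)$ does not fire at time $t$ in $\mathcal{A}_2$, then in $\mathcal{L}$ no neuron in $reps(c)$ fires at time $t$. Note that, unlike the firing direction for $\mathcal{L}$ (Theorem~\ref{thm: L-firing-guarantee}, which forced the stronger predicate $Q(t)$ about \emph{every surviving} neuron), here the same predicate $P(t)$ used for $\mathcal{H}$ suffices; partial connectivity requires no strengthening because it only removes edges and hence can only \emph{lower} incoming potentials.

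For the base case $t = 0$, non-firing of $rep(c)$ in $\mathcal{A}_2$ forces $c \notin B$ by the presentation convention for single-neuron representations, and then the multi-neuron presentation convention for $\mathcal{L}$ guarantees that no neuron of $reps(c)$ fires at time $0$. For the inductive step $t \geq 1$, I assume $P(t-1)$ and take $c$ with $level(c) = t$ such that $rep(c)$ does not fire at time $t$ in $\mathcal{A}_2$. Since $\mathcal{A}_2$ has full weight-$1$ connectivity and threshold $r_1 k$, the incoming potential to $rep(c)$ equals the number of children $c'$ whose single $rep$ fires at time $t-1$; hence strictly fewer than $r_1 k$ children of $c$ have a firing $rep$ in $\mathcal{A}_2$. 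By $P(t-1)$, every child $c'$ whose $rep$ does \emph{not} fire in $\mathcal{A}_2$ has \emph{no} firing neuron in $reps(c')$ in $\mathcal{L}$ at time $t-1$, so the only layer-$(t{-}1)$ neurons that can fire in $\mathcal{L}$ belong to the $reps$ of those strictly-fewer-than-$r_1 k$ children.

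The final step is the potential estimate, which I expect to be the only place requiring care. Fix any $v \in reps(c)$. Its incoming potential in $\mathcal{L}$ counts the surviving, firing neighbors $u$ with $(u,v) \in E$; by the previous paragraph these lie in $reps(c')$ for the strictly-fewer-than-$r_1 k$ firing children, and each such child contributes at most its $m$ reps. Thus the potential to $v$ is strictly less than $r_1 k m$. Using the gap assumption $r_1 \leq a r_2 (1 - \epsilon)$ for $\mathcal{L}$, we get $r_1 k m \leq a r_2 k m (1 - \epsilon) = \tau$, so the potential to $v$ is strictly below $\mathcal{L}$'s firing threshold and $v$ does not fire, establishing $P(t)$. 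The main subtlety to flag is precisely that we deliberately use only the coarse bound ``at most $m$ reps per firing child'' and never invoke the connectivity lower bound (the $a m (1 - \epsilon)$ constraint from Section~\ref{sec: L-def}): that constraint is what the firing direction needs, whereas for non-firing the edge removal induced by partial connectivity is harmless, and it is the \emph{larger} gap $r_1 \leq a r_2 (1 - \epsilon)$ that absorbs the extra factor $a$ now present in the threshold.
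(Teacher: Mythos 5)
Your proposal is correct and matches the paper's own proof essentially step for step: the same induction on levels with the same (unstrengthened) predicate, the same base case via the two presentation conventions, and the same potential bound (strictly less than $r_1 k m \leq a r_2 k m (1-\epsilon) = \tau$) using the gap assumption $r_1 \leq a r_2 (1-\epsilon)$. Your added remark --- that the non-firing direction needs no strengthening of the inductive predicate and never invokes the $a m (1-\epsilon)$ connectivity lower bound, since removing edges can only lower potentials --- is exactly the right explanation for why this direction is simpler than Theorem~\ref{thm: L-firing-guarantee}.
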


In other words, for every $B \subseteq C_0$, for the two unique executions of the two networks $\mathcal{A}_2$ and $\mathcal L$ on $B$, the following holds:  
For every concept $c$, if $rep(c)$ does not fire at time $level(c)$ in $\mathcal{A}_2$, then in $\mathcal{L}$, no neurons in $reps(c)$ fire at time $level(c)$.

\begin{proof}
Fix $B \subseteq C_0$, and consider the unique executions of $\mathcal{A}_2$ and $\mathcal L$ on input $B$.
We prove the following statement $Q(t)$, for every $t \geq 0$:

$Q(t)$:  For every concept $c$ with $level(c) = t$, if $rep(c)$ does not fire at time $t$ in $\mathcal{A}_2$, then in $\mathcal{L}$, no neurons in $reps(c)$ fire at time $t$.

We prove $Q(t)$ by induction on $t$, for all concepts $c$.

\emph{Base:}  $t = 0$: 
Suppose that $c$ is a concept with $level(c) = 0$ and $rep(c)$ does not fire at time $0$ in $\mathcal{A}_2$.
Then by definition of presentation in $\mathcal{A}_2$, $c \notin B$.  
Then by definition of presentation in $\mathcal L$, none of the neurons in $reps(c)$ fire at time $0$, as needed.

\emph{Inductive step:}  $t \geq 1$:
Assume that $Q(t-1)$ holds, and consider $c$ with $level(c) = t$.
Suppose that $rep(c)$ does not fire at time $t$ in $\mathcal{A}_2$.
Then $rep(c)$ must not have enough incoming potential from its children to fire.
Since the threshold in $\mathcal{A}_2$ is $r_1 k$, this means that the $reps$ of strictly fewer than $r_1 k$ of $c$'s children fire at time $t-1$.

Then by inductive hypothesis $Q(t-1)$, we have that, in $\mathcal L$, for each child $c'$ of $c$ that does not fire, none of the neurons in $reps(c')$ fire at time $t-1$.
This yields a total incoming potential to each neuron $v \in reps(c)$ of strictly less than $r_1 k m$.  But the firing threshold for each such $v$ is $a r_2 k m (1 - \epsilon)$.
Since we have assumed that $r_1 \leq a r_2 (1 - \epsilon)$, we know that 
$r_1 k m \leq a r_2 k m (1 - \epsilon)$.  
So the incoming potential to each such $v$ is strictly less than its firing threshold $a r_2 k m (1 - \epsilon)$, which implies that $v$ does not fire at time $t$, as needed.
\end{proof}

Now we can show that $\mathcal L$ satisfies the non-firing requirement.

\begin{theorem}
\label{thm: L-non-firing-guarantee-2}
$\mathcal{L}$ satisfies Part 2 of the definition of the recognition problem for networks with multi-neuron representations, Definition~\ref{def: recog-ff-multi}.
\end{theorem}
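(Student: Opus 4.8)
The plan is to avoid any fresh induction and instead assemble the statement from the two-network decomposition already built up in the paper. Concretely, I would instantiate the non-firing half of the combined mapping machinery, Theorem~\ref{thm: general-mapping-theorem2}, taking $\mathcal{A} = \mathcal{A}_2$ and $\mathcal{D} = \mathcal{L}$. That theorem reduces the assertion ``$\mathcal{L}$ satisfies Part 2 of Definition~\ref{def: recog-ff-multi}'' to two hypotheses: that $\mathcal{A}_2$ guarantees $r_1$-non-firing for $\mathcal{C}$, and that $\mathcal{L} \leq_{impl2} \mathcal{A}_2$. The entire proof is then a short chain of citations, exactly paralleling the high-connectivity analogue, Theorem~\ref{thm: H-non-firing-guarantee-2}.

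Both hypotheses are already in hand. The first is exactly Theorem~\ref{thm: A2-non-firing-guarantee}, which establishes the $r_1$-non-firing guarantee for the abstract network $\mathcal{A}_2$ against the threshold $r_1 k$. The second is exactly Theorem~\ref{thm: L-non-firing-guarantee}, the implementation relationship $\mathcal{L} \leq_{impl2} \mathcal{A}_2$, which I would simply cite. So the body reads: invoke Theorem~\ref{thm: A2-non-firing-guarantee} for $r_1$-non-firing, invoke Theorem~\ref{thm: L-non-firing-guarantee} for $\leq_{impl2}$, then conclude via Theorem~\ref{thm: general-mapping-theorem2}.

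Consequently there is no real obstacle at this step; all the substantive work lives upstream in Theorem~\ref{thm: L-non-firing-guarantee}. The delicate part there---which this theorem merely harvests---is the inductive argument bounding the incoming potential to each $v \in reps(c)$ by strictly less than $r_1 k m$, and then using the threshold $a r_2 k m (1-\epsilon)$ together with the gap assumption $r_1 \leq a r_2 (1-\epsilon)$, which gives $r_1 k m \leq a r_2 k m (1-\epsilon)$, so the threshold is not met. Notably, the non-firing direction is more forgiving than the firing direction: it does not require the strengthened ``every surviving neuron fires'' predicate $Q(t)$ that the low-connectivity firing proof needed, because a crude count of possibly-firing children already forces the potential below threshold. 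Thus the only thing I would verify here is that the hypotheses of Theorem~\ref{thm: general-mapping-theorem2} match verbatim---the same $r_1$, the same concept hierarchy $\mathcal{C}$, and the $\leq_{impl2}$ relation pointing from $\mathcal{L}$ to $\mathcal{A}_2$---after which the conclusion is immediate.
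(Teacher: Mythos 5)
Your proposal is correct and is exactly the paper's own proof: it cites Theorem~\ref{thm: A2-non-firing-guarantee} for the $r_1$-non-firing guarantee, Theorem~\ref{thm: L-non-firing-guarantee} for $\mathcal{L} \leq_{impl2} \mathcal{A}_2$, and concludes via Theorem~\ref{thm: general-mapping-theorem2}. Your side remarks about where the substantive work lives (the upstream inductive potential bound and the gap assumption $r_1 \leq a r_2 (1-\epsilon)$) are also accurate.
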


\begin{proof}
Theorem~\ref{thm: A2-non-firing-guarantee} says that $\mathcal{A}_2$ guarantees $r_1$-non-firing for $\mathcal{C}$.  
Theorem~\ref{thm: L-non-firing-guarantee} says that $\mathcal{L} \leq_{impl2} \mathcal{A}_2$.
Then Theorem~\ref{thm: general-mapping-theorem2} implies that $\mathcal{L}$ satisfies Part 2 of Definition~\ref{def: recog-ff-multi}.
\end{proof}

\subsection{Correctness of $\mathcal L$}

\begin{theorem}
$\mathcal{L}$ $(r_1,r_2)$-recognizes $\mathcal C$.
\end{theorem}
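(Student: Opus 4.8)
The plan is to prove that $\mathcal{L}$ $(r_1,r_2)$-recognizes $\mathcal{C}$ by simply combining the two firing and non-firing results already established for $\mathcal{L}$, in exactly the same way that correctness of $\mathcal{H}$ was obtained from its two corresponding results. The recognition condition in Definition~\ref{def: recog-ff-multi} has precisely two parts, a firing requirement (Part 1) and a non-firing requirement (Part 2), so it suffices to invoke one theorem for each part.

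Concretely, I would first recall Theorem~\ref{thm: L-firing-guarantee-2}, which states that $\mathcal{L}$ satisfies Part 1 of Definition~\ref{def: recog-ff-multi}; this was itself derived by composing the firing guarantee of $\mathcal{A}_1$ (Theorem~\ref{thm: A1-firing-guarantee}) with the implementation relationship $\mathcal{L} \leq_{impl1} \mathcal{A}_1$ (Theorem~\ref{thm: L-firing-guarantee}) through the general mapping Theorem~\ref{thm: general-mapping-theorem1}. Then I would recall Theorem~\ref{thm: L-non-firing-guarantee-2}, which states that $\mathcal{L}$ satisfies Part 2 of Definition~\ref{def: recog-ff-multi}; this was obtained analogously by composing the non-firing guarantee of $\mathcal{A}_2$ (Theorem~\ref{thm: A2-non-firing-guarantee}) with $\mathcal{L} \leq_{impl2} \mathcal{A}_2$ (Theorem~\ref{thm: L-non-firing-guarantee}) via Theorem~\ref{thm: general-mapping-theorem2}. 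Since satisfying both Part 1 and Part 2 is exactly what it means to $(r_1,r_2)$-recognize $\mathcal{C}$, the two facts together immediately yield the conclusion.

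There is essentially no obstacle at this final step: all the genuine work has already been done in the two preceding subsections, where the inductive arguments (especially the strengthened predicate $Q(t)$ needed to handle the interaction of partial connectivity and failures, and the gap assumption $r_1 \leq a r_2 (1-\epsilon)$ used in the non-firing direction) were carried out. The main conceptual payoff of the paper's decomposition strategy is visible precisely here, namely that the final correctness claim reduces to a one-line assembly of two independently proved halves. Thus the proof is simply:

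\begin{proof}
By Theorems~\ref{thm: L-firing-guarantee-2} and~\ref{thm: L-non-firing-guarantee-2}.
\end{proof}
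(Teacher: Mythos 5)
Your proposal is correct and matches the paper's own proof exactly: the paper also concludes $(r_1,r_2)$-recognition of $\mathcal{C}$ by citing Theorems~\ref{thm: L-firing-guarantee-2} and~\ref{thm: L-non-firing-guarantee-2}, one for each part of Definition~\ref{def: recog-ff-multi}. Your surrounding explanation of how those two theorems were themselves assembled is accurate as well.
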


\begin{proof}
By Theorems~\ref{thm: L-firing-guarantee-2} and~\ref{thm: L-non-firing-guarantee-2}.
\end{proof}

Thus, we have shown correctness of the deterministic network $\mathcal{L}$.  The corresponding network of~\cite{lynch2024multineuron} is probabilistic, in that it chooses the failed set $F$ randomly.
It is shown in~\cite{lynch2024multineuron} that, with high probability, that network achieves the restrictions on failures that are assumed here.
Therefore, we can make a claim about the probabilistic version of the network:  that with high probability it $(r_1,r_2)$-recognizes $\mathcal C$.\footnote{Again, we remark that the probabilities here are slightly larger than in~\cite{lynch2024multineuron}, because the "shapes" of the hierarchies are slightly different.} 

\section{Conclusions}
\label{sec: conclusions} 

In this paper, we have demonstrated how detailed, failure-prone Spiking Neural Networks can be proved to work correctly by relating them to more abstract, reliable networks.  
We focus here on the problem of recognizing structured concepts in the presence of partial information, 
The detailed networks use multiple neurons to represent each concept, whereas the abstract networks use only one.  The resulting proofs nicely decompose the reasoning into high-level arguments about the correctness of recognition and lower-level arguments about how networks with multi-neuron representations can emulate networks with single-neuron representations.

Specifically, we consider two detailed networks:  $\mathcal H$, a failure-prone, feed-forward network with full connectivity between layers, and $\mathcal L$, a failure-prone feed-forward network with partial connectivity.
Both $\mathcal H$ and $\mathcal L$ are based on similar networks in~\cite{lynch2024multineuron}.
We show that each of $\mathcal H$ and $\mathcal L$ implements two abstract networks $\mathcal{A}_1$ and $\mathcal{A}_2$, where $\mathcal{A}_1$ expresses firing guarantees and $\mathcal{A}_2$ expresses non-firing guarantees.
We use the two notions of implementation $\leq_{impl1}$ and $\leq_{impl2}$ to relate the detailed networks to $\mathcal{A}_1$ and $\mathcal{A}_2$, respectively.

One difference between our networks and those in~\cite{lynch2024multineuron} is that the networks in~\cite{lynch2024multineuron} are probabilistic, based on random failures.
Here we abstract away from issues of probability, by assuming that the failures are well distributed with respect to various sets of surviving neurons and with respect to edge connections.
Our presentation also differs from that in~\cite{lynch2024multineuron}, in that in~\cite{lynch2024multineuron}, our proofs intertwine reasoning about correctness of recognition with reasoning about multi-neuron representations.
Here we separate these two aspects.

This paper, as well as~\cite{lynch2024multineuron}, were partially inspired by work on the assembly calculus~\cite{PapadimitriouVempala,DBLP:conf/innovations/Legenstein0PV18,DBLP:series/lncs/0001PVL19}.  
That work studies multi-neuron representations of structured concepts in networks with partial connectivity, but does not consider the main issue we emphasize here, namely, how mechanisms using multi-neuron representations might be regarded as implementations of simpler mechanisms using single-neuron representations.

Another difference is that much of the assembly calculus work focuses on learning, not just representation.  
Presumably one could study a correspondence between learning in networks with multi-neuron representations and networks with single-neuron representations, but we have not yet considered this.

%


\paragraph{Future work:}
The approach of this paper basically involves embedding concept structures in Spiking Neural Networks in two ways:  using single-neuron representations, and using multi-neuron representations.
Then we relate the two types of representation using formal implementation relationships, and use these connections to prove properties of the detailed networks based on properties of the abstract networks.

This approach might be applied to study many other neural networks.
For starters, our paper~\cite{lynch2024multineuron} contains a third detailed network for recognizing hierarchical concepts, in addition to the two we adapted for this paper.
That network has lateral edges, in addition to forward edges, and is more closely related to the assembly calculus work than the strictly feed-forward networks studied here.
It would be interesting to extend the approach of this paper to that network,
by abstracting away from its use of probability, introducing appropriate abstract versions, and demonstrating implementation relationships between the detailed network and the abstract networks.

Other networks for recognizing hierarchical concepts might allow some feedback edges in addition to forward edges, as in~\cite{DBLP:conf/sirocco/LynchM23}.  It remains to define versions of those networks with multi-neuron representations, and to try to understand them by relating them formally to networks with single-neuron representations.

Our approach could be useful in studying some brain mechanisms that have been studied previously using the assembly calculus.  The usual assembly calculus approach models the mechanisms in detail, and attempts to understand them directly via simulation and analysis.
But since the mechanisms are complicated, this does not seem easy.  It might be useful to first consider abstract versions of the networks, and ascertain that they work as desired.  
Then one could study the detailed versions by showing that they implement the abstract versions.

In this paper, our concept structures are simple hierarchies.
However, we might also consider more general concept structures, modeled as arbitrary labeled directed graphs.
The abstract networks could involve rather direct embeddings in which each graph vertex is represented by a single node, whereas the detailed networks could include multiple representations for each node, neuron failures, and partial connectivity.

Viewed very generally, our approach involves defining a detailed, realistic version of a network, then trying to understand its behavior by relating it formally to an abstract version, or abstract versions, of the network.
This general approach seems promising as a way of understanding and analyzing complex brain mechanisms.


\bibliography{Multi}

\begin{thebibliography}{10}

\bibitem{DBLP:conf/innovations/Legenstein0PV18}
Robert Legenstein, Wolfgang Maass, Christos~H. Papadimitriou, and Santosh~S. Vempala.
\newblock Long term memory and the densest k-subgraph problem.
\newblock In Anna~R. Karlin, editor, {\em 9th Innovations in Theoretical Computer Science Conference, {ITCS} 2018, January 11-14, 2018, Cambridge, MA, {USA}}, volume~94 of {\em LIPIcs}, pages 57:1--57:15. Schloss Dagstuhl - Leibniz-Zentrum f{\"{u}}r Informatik, 2018.

\bibitem{LM21}
Nancy Lynch and Frederik Mallmann-Trenn.
\newblock Learning hierarchically structured concepts.
\newblock {\em Neural Networks}, 143:798--817, November 2021.

\bibitem{lynch2024multineuron}
Nancy~A. Lynch.
\newblock Multi-neuron representations of hierarchical concepts in spiking neural networks, January 2024.
\newblock arXiv:2401.04628.

\bibitem{DBLP:conf/sirocco/LynchM23}
Nancy~A. Lynch and Frederik Mallmann{-}Trenn.
\newblock Learning hierarchically-structured concepts {II:} overlapping concepts, and networks with feedback.
\newblock In Sergio Rajsbaum, Alkida Balliu, Joshua~J. Daymude, and Dennis Olivetti, editors, {\em Structural Information and Communication Complexity - 30th International Colloquium, {SIROCCO} 2023, Alcal{\'{a}} de Henares, Spain, June 6-9, 2023, Proceedings}, volume 13892 of {\em Lecture Notes in Computer Science}, pages 46--86. Springer, 2023.

\bibitem{DBLP:conf/podc/LynchT87}
Nancy~A. Lynch and Mark~R. Tuttle.
\newblock Hierarchical correctness proofs for distributed algorithms.
\newblock In Fred~B. Schneider, editor, {\em Proceedings of the Sixth Annual {ACM} Symposium on Principles of Distributed Computing, Vancouver, British Columbia, Canada, August 10-12, 1987}, pages 137--151. {ACM}, 1987.

\bibitem{DBLP:series/lncs/0001PVL19}
Wolfgang Maass, Christos~H. Papadimitriou, Santosh~S. Vempala, and Robert Legenstein.
\newblock Brain computation: {A} computer science perspective.
\newblock In Bernhard Steffen and Gerhard~J. Woeginger, editors, {\em Computing and Software Science - State of the Art and Perspectives}, volume 10000 of {\em Lecture Notes in Computer Science}, pages 184--199. Springer, 2019.

\bibitem{PapadimitriouVempala}
Christos~H. Papadimitriou and Santosh~S. Vempala.
\newblock Random projection in the brain and computation with assemblies of neurons.
\newblock In {\em 10th Innovations in Theoretical Computer Science Conference, ITCS 2019}, pages 57:1--57:19, San Diego, California, January 2019.

\bibitem{PVMM20}
Christos~H. Papadimitriou, Santosh~S. Vempala, Daniel Mitropolsky, and Wolfgang Maass.
\newblock Brain computation by assemblies of neurons.
\newblock {\em Proceedings of the National Academy of Sciences}, 2020.

\bibitem{DBLP:journals/njc/SegalaL95}
Roberto Segala and Nancy~A. Lynch.
\newblock Probabilistic simulations for probabilistic processes.
\newblock {\em Nord. J. Comput.}, 2(2):250--273, 1995.

\bibitem{Valiant}
Leslie~G. Valiant.
\newblock {\em Circuits of the Mind}.
\newblock Oxford University Press, 2000.

\end{thebibliography}

\end{document}

\section{Initial Thoughts on a General Notion of Implementation}
\label{app: general}

As a start on a general method:
We consider two deterministic networks, abstract network $\mathcal A$ and detailed network $\mathcal D$. 
We hypothesize a binary relation $R$ from states of $\mathcal D$ states of $\mathcal A$, such that start states are $R$-related, and $R$ is preserved by the transitions. 
That is:
\begin{enumerate}
    \item  If $s_{\mathcal D}$ and $s_{\mathcal A}$ are the start states of  $\mathcal D$ and $\mathcal A$ respectively, then $(s_{\mathcal D}, s_{\mathcal A}) \in R$.
    \item  If $s_{\mathcal D}$ and $s_{\mathcal A}$ are reachable states of  $\mathcal D$ and $\mathcal A$ respectively, $(s_{\mathcal D}, s_{\mathcal A}) \in R$, $s'_{\mathcal D}$ follows from $s_{\mathcal D}$ in network $\mathcal D$, and $s'_{\mathcal A}$ follows from $s_{\mathcal A}$ in network $\mathcal A$ then also $(s'_{\mathcal D}, s'_{\mathcal A}) \in R$. 
\end{enumerate}
Then we can prove a general theorem saying that, given executions $\alpha_{\mathcal D}$ and $\alpha_{\mathcal A}$, and a relation $R$ as above, the two executions are related in that, for every $t \geq 0$, the states of the two networks at time $t$ are $R$-related.  This is obvious, by induction on $t$.  In this case, we say that $\alpha_{\mathcal D}$ and $\alpha_{\mathcal A}$ are $R$-related.

To use such a correspondence to prove correctness, we need also to relate the correctness conditions at the two levels.  They may be different from each other, as in the recognition problems of this paper.  Unlike models like I/O automata, we do not here have designated external actions that can be used to formulate correctness properties, and that can be the same at both levels of abstraction.
Instead, we must formulate correctness conditions in terms of states instead of actions.

So we hypothesize an \emph{abstract correctness condition}, which is simply an arbitrary set of infinite executions of $\mathcal A$, and a \emph{detailed correctness condition}, which is an arbitrary set of infinite executions of $\mathcal D$.
We need to assume that, if an infinite execution $\alpha_{\mathcal D}$ of $\mathcal D$ and an infinite execution $\alpha_{\mathcal A}$ of $\mathcal A$ are $R$-related, and if  $\alpha_{\mathcal A}$ satisfies the abstract correctness condition, then $\alpha_{\mathcal D}$ satisfies the detailed correctness theorem.

Given all this, the general method is to define the two deterministic networks $\alpha_{\mathcal D}$ and $\alpha_{\mathcal A}$, and the abstract and detailed correctness conditions.  Show that these correctness conditions correspond as above.  Then define a relation $R$, and prove that it satisfies conditions 1. and 2. above.
The conclusion is that, if $\mathcal A$ satisfies the abstract correctness condition and $R$ satisfies the conditions 1. and 2. above, then $\mathcal D$ satisfies the detailed correctness condition.

A limitation of this formulation is that it is rigid in its handling of time.  Although it works for the examples of this paper, we imagine that other cases where abstraction might be useful could have looser connections between the timing of events in the two networks.  This rigid notion of correspondence may not be adequate to capture such cases.  It remains to consider more examples, before trying to establish general notions of correspondence.